\documentclass[final]{elsarticle}

\usepackage[utf8]{inputenc} 
\usepackage[english]{babel} 
\usepackage{fullpage}

\usepackage[normalem]{ulem}

\usepackage{amsfonts}
\usepackage{amsmath}
\usepackage{amssymb}
\usepackage{amsthm}
\usepackage{algorithm}
\usepackage[noend]{algpseudocode}
\usepackage{subfigure,epsfig,color}
\usepackage{colortbl}
\usepackage{tikz}
\usetikzlibrary{mindmap}

\usepackage{xifthen}
\usepackage{geometry}
\usepackage{pdflscape}
\usepackage{enumitem}

\newcommand{\ed}[1]{\textcolor{black}{#1}}
\newcommand{\ee}[1]{\textcolor{black}{#1}}
\newcommand{\ef}[1]{\textcolor{black}{#1}}

\newtheorem{lemma}{Lemma}
\newtheorem{theorem}{Theorem}
\newtheorem{definition}{Definition}
\newtheorem{corollary}{Corollary}

\tikzstyle{vertex}=[circle,minimum size=7.50pt]

\begin{document}

\title{General Cut-Generating Procedures for the Stable Set
Polytope\footnote{This work has been partially supported by the Stic/AmSud joint
program by CAPES (Brazil), CNRS and MAE (France), CONICYT (Chile) and MINCYT (Argentina) --project 13STIC-05-- and the Pronem program by FUNCAP/CNPq (Brazil) --project ParGO.}}

\author[ufrrjdcc]{Ricardo C. Corr\^{e}a\corref{ufc}}
\ead{correa@ufrrj.br}
\author[sarmiento1]{Diego Delle Donne}
\ead{ddelledo@ungs.edu.ar}
\author[sarmiento2]{Ivo Koch}
\ead{ikoch@ungs.edu.ar}
\author[sarmiento1]{Javier Marenco}
\ead{jmarenco@ungs.edu.ar}

\address[ufrrjdcc]{Universidade Federal Rural do Rio de Janeiro,
Departamento de Ciência da Computa\c c\~ao,
Av. Governador Roberto Silveira S/N,
26020-740 Nova Iguaçu - RJ, Brazil}
\address[sarmiento1]{Universidad Nacional de General Sarmiento,
Instituto de Ciencias,
J. M. Guti\'errez 1150, Malvinas Argentinas, (1613) Buenos Aires,
Argentina}
\address[sarmiento2]{Universidad Nacional de General Sarmiento,
Instituto de Industria,
J. M. Guti\'errez 1150, Malvinas Argentinas, (1613) Buenos Aires,
Argentina}

\cortext[ufc]{This author was with the ParGO research team, Universidade
Federal do Cear\'a, Brazil, when most part of this work has been done.}

\begin{abstract}
We propose general \ee{separation} procedures for generating cuts for the
stable set polytope, inspired by a procedure by Rossi and Smriglio and applying
a lifting method by Xavier and Camp\^{e}lo. In contrast to existing
cut-generating procedures, ours generate both rank and non-rank valid
inequalities, hence they are of a more general nature than existing methods.
This is accomplished by iteratively solving a lifting problem, which consists of
a maximum weighted stable set problem on a smaller graph. Computational
experience on DIMACS benchmark instances shows that the proposed approach may be
a useful tool for generating cuts for the stable set polytope.
\end{abstract}

\maketitle

\section{Introduction}

Let $G = (V, E)$ be an undirected graph with node set $V$ and edge set $E$. A
\emph{stable set} in $G$ is a subset of pairwise non-adjacent vertices of $G$.
Given a graph $G$, the \ee{\emph{maximum stable set (MSS)}} problem asks
for a stable set $S$ in $G$ of maximum cardinality. The {\em stability number of
$G$} is \ee{this maximum cardinality} and is denoted by $\alpha(G)$. The
MSS problem is \ef{computationally hard to solve in practice, being in NP-Hard}
\ee{unless the graph $G$ has some special structure. For an arbitrary input graph
$G$, a number of exact methods have been developed to solve it through several
combinatorial or mathematical programming-based techniques. For a survey
of these theoretical and practical aspects of the MSS problem,
see~\cite{Bomze99themaximum} and references therein}.

\ef{Enumerative combinatorial algorithms have shown to be efficient to solve the
MSS problem exactly for moderately sized graphs (for an overview,
see~\cite{WuHao.15}). Typically, such algorithms perform a search in a tree with
the employment of simple and fast, but still effective, bounding procedures for pruning
purposes. In this vein, the most successful approach involves the use of
approximate colorings of selected subgraphs of $\bar G$ (the complement of $G$).
This bound is based on the following remark: if $\bar G$ admits an
$\ell$-coloring, then $\alpha(G) \leq \ell$. This is a relatively
weak bound and, consequently, the procedure to compute it is generally applied
at numerous nodes of the search tree. However, it can be computed quickly by
means of a greedy coloring heuristic implemented with bit parallelism operations~\cite{CorreaMCMD14,Segundo.Losada.Jimenez.11,
Tomita.Kameda.07}.}

\ef{An alternative to combinatorial algorithms is the use of sophisticated
mathematical programming techniques to handle the combinatorial properties of
the polytope associated with the formulation $\alpha(G) = \max \{ \sum_{v
\in V} x_v \mid x_u + x_v \leq 1, uv \in E, x_v \in \{ 0, 1\}, v \in V\}$.}
Although combinatorial methods for the MSS problem \ee{from the literature} outperform mathematical programming-based algorithms \ee{devised so far}, it is of great interest to continue the search for
efficient polyhedral methods for this problem. \ee{Despite its natural
theoretical relevance, there are other motivations of algorithmic nature,
namely: (a) the algorithms can be easily extended to the weighted version of
the MSS problem}, (b) MSS constraints frequently appear as a sub-structure in
many combinatorial optimization problems, \ee{(c) in many situations, probing
strategies gives MSS valid inequalities on conflicting variables for general
mixed integer programs (see, e.g., \cite{Atamturk200040,BritoSantosPoggi15})},
and (d) real applications may need specific versions of the MSS problem with
additional constraints. \ee{In this context, procedures for valid inequalities
generation often turns out to be effective}.

\ef{There are two main directions of research when polyhedral techniques, in
particular procedures for valid inequalities generation, are concerned. The
first direction is usually referred to as the {\em lift-and-project}
method~\cite{Cornuejols.08}, which consists in three steps: first, a lifting
operator is applied to the initial formulation to obtain a lifted formulation in
a higher dimensional space; second, the lifted formulation is strengthened by
means of additional valid inequalities; and third, a strengthened relaxation
of the initial formulation is finally obtained as a result of an appropriate
projection of the strengthened lifted formulation onto the original space.
Several upper bounds for $\alpha(G)$ have been stated in connection of this
method, such as the ones based on semidefinite programming (SDP, for short)
relaxations described in~\cite{Dukanovic.Rendl.07,Lovasz.Schrijver.91}, which can be rather
time-consuming to compute in practice. More recently, a new relaxation was
introduced in~\cite{Giandomenico.Rossi.Smriglio.13} which preserves some
theoretical properties of SDP relaxations in generating effective cuts but is
computationally more tractable for a range of synthetic instances.}

\ed{The second direction of research is integer programming, which in turn have
followed two main approaches. The first approach consists in developing strong cuts coming from facet-inducing inequalities associated with special structures in the input graph (\ee{for instance,} cliques, odd holes, webs, among others) and searching for specialized separation techniques for these families of
inequalities \ee{(an up to date list of references for this approach can be
found in~\cite{Pardalos})}.
The second approach relies on general cut-generating procedures which, starting from a fractional solution, search for a violated
inequality with no prespecified structure. Such procedures were either shown to
be effective in practice \cite{Pardalos, Rossi.Smriglio.01} or to generate provably
strong inequalities \cite{XavierCampelo11}. The main contribution of this work are
general procedures that are both effective and generate inequalities
that can be proved to be facet-inducing under quite general conditions.}

\ed{We now discuss existing works following the second approach, i.e., procedures
generating cuts with no prespecified structure.}
Mannino and Sassano~\cite{ManninoSassano96} introduced in 1996 the idea of edge
projections as a specialization of Lov\'asz and Plummer's clique projection
operation~\cite{LovaszPlummer86}. Many properties of edge projections are
discussed in~\cite{ManninoSassano96} and, based on these properties, a procedure
computing an upper bound for the MSS problem is developed. This bound is then
incorporated in a branch and bound scheme. Rossi and Smriglio take these ideas
into an integer programming environment in~\cite{Rossi.Smriglio.01}, where a
separation procedure based on edge projection is proposed. Finally, Pardalos et al.~\cite{Pardalos} extend
the theory of edge projection by explaining the facetness properties of the
inequalities obtained by this procedure. The authors give a branch and cut
algorithm that uses edge projections as a separation tool, as well as several
specific families of valid inequalities such as the odd hole inequalities (with
a polynomial-time exact separation algorithm), the clique inequalities (with \ed{heuristic separation procedures}), and mod-$\{2, 3, 5, 7\}$ cuts.

Rossi and Smriglio propose in \cite{Rossi.Smriglio.01} to employ a sequence of
edge projection operations to reduce the original graph $G$ and make it denser
at the same time, allowing for a faster identification of clique inequalities on
the reduced graph $G'$. \ed{This procedure iteratively removes and projects
edges with certain properties, and heuristically finds violated rank
inequalities (i.e., inequalities of the form $\sum_{v\in A} x_v \leq
\alpha(G[A])$, where $A\subseteq V$ and $G[A]$ is the subgraph of $G$ induced by
$A$).} A key step for achieving this is to be able to establish how $\alpha(G)$
is affected by these edge projections, or, in other words, how exactly
$\alpha(G)$ relates to $\alpha(G')$. We aim at generalizing Rossi and Smriglio's
procedure by projecting cliques instead of edges, so we also need to show how
$\alpha(G)$ changes as a result of this operation. Our method allows thus to
establish a more general relation between $G$ and \ee{the graph resulting from
the clique projection}.

In this article we propose the use of clique projections as a general method for
cutting plane generation for the MSS, along with new clique lifting
\ee{operations that lead} to stronger inequalities than those obtained with the
edge projection method. The proposed method is able to generate both rank and
\ee{weighted} rank valid inequalities (to be defined below), by resorting to the
general lifting \ee{operation} introduced in~\cite{XavierCampelo11}. This
approach allows to produce cuts of a quite general nature, including cuts from
the known families of valid inequalities for the MSS polytope. \ee{Based
upon the projection and lifting operations, we give a separation procedure that}
departs from the usual template-based paradigm for generating cuts, and seeks to
unify and generalize the separation procedures for the known cuts. In this
sense, our main goal is to provide a more complete understanding of the maximum
stable set polytope, which may help also in the solution of other combinatorial
optimization problems. \ee{Experimental results are provided to validate the
general procedure we propose.}

This work is organized as follows. In Section~\ref{sec:polytope} we define the
MSS polytope $STAB(G)$, \ed{we define the operation of clique projection and we
explore some basic facts on this operation. Section~\ref{sec:lifting} introduces the
crucial concept of \emph{clique lifting}, based on the results in
\cite{XavierCampelo11}.} In Sections~\ref{sec:procedures} and~\ref{sec:separation} we
introduce our cut-generating method, by applying the lifting method presented
in~\cite{XavierCampelo11}. Finally, in Section~\ref{sec:experiments} we present some
computational experience on the DIMACS and randomly generated instances, which
show that the method is competitive. \ee{The paper is closed with some
concluding remarks in Section~\ref{sec:conc}.}

\section{The Stable Set Polytope and the Clique Projection Operation}
\label{sec:polytope}

Let $n := |V|$, \ee{$N_G(v)$ be the neighborhood of $v$ in graph $G$,} and
$\mathcal S(G)\subseteq\{0,1\}^n$ be the set of all characteristic vectors of
stable sets of $G$. We write simply \ee{$N(v)$ and} $\mathcal S$,
\ee{respectively}, when $G$ is clear from context. For $W \subseteq V$,
$\mathcal S(G[W])$ stands for the characteristic vectors of stable sets of $G$ involving vertices in $W$ only. The polytope of stable sets of $G$ is denoted by
\[
STAB(G) = \mbox{conv} \{ x \mid x \in \mathcal S(G) \}.
\]
Note that the stability number of $G$ is $\alpha(G) = \max \{ \sum_{v \in V} x_v \mid x \in STAB(G) \}$. If $c \in \mathbb{R}^n$, then the {\em weighted \ed{stability} number of $G$, according to $c$} is $\alpha(G, c) = \max \{ c^\top x \mid x \in STAB(G) \}$. The general form of a facet-inducing inequality of $STAB(G)$ is
\begin{equation}
c^\top x \leq \alpha(G[H], c),
\label{eq:ineq}
\end{equation}
where $c \in \mathbb{R}^n$, $c \geq \mathbf{0}$, $H = \{ v \in V \mid c_v > 0
\}$, and $(G[H], c)$ is a \ed{so-called} {\em facet-subgraph} of $G$~\cite{LiptakLovasz01}.
Note that if $c\in\{0,1\}^n$ then we have the rank inequality mentioned in the Introduction.

Our interest is to build inequalities of type~\eqref{eq:ineq} by means of the
following operation.

\begin{definition}[Clique
Projection~\cite{LovaszPlummer86}] Let $W \subseteq V$, $|W| \geq 2$, be a
clique in $G$. The {\em clique projection} of $W$ gives the graph $G \mid W = (V, E \mid W)$ in which $E\mid W
= E \cup \{ uv \not\in E \mid W \subseteq N(u) \cup N(v) \}$.
\end{definition}

\ed{Figure~\ref{fig:projection} shows an example of this operation.}
The edges in $(E \mid W) \backslash E$ (i.e., the added edges after the
projection) are called \emph{false edges}. These are the edges simulated
by $W$ in the sense stated in the lemma below. \ed{For $W\subseteq V$, we
define $x_W = \sum_{v\in W} x_v$.}

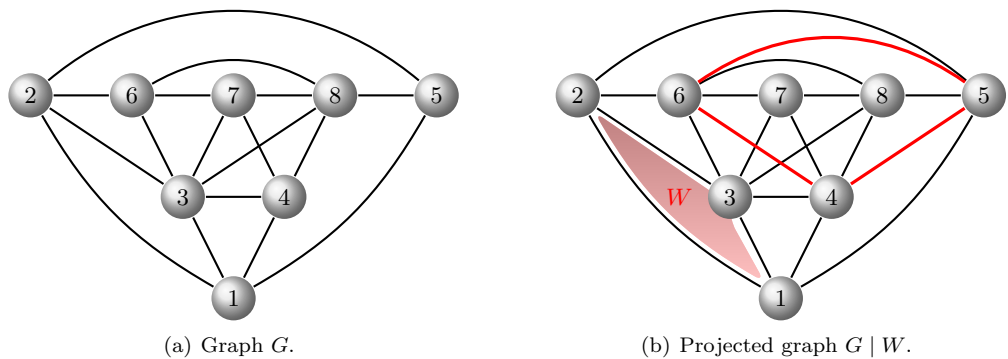
\begin{figure}[htb]
\centering
        \begin{subfigure}[Graph $G$.]{
			\quad\begin{tikzpicture}[thick,fill opacity=0.5,scale=0.45,font=\small]
\foreach \name/\xpos in {2/-6.0,6/-3.0,7/0,8/3.0,5/6.0}
	\node[vertex,ball color=gray!20,opaque] (\name) at (\xpos,3.0) {$\name$};
\foreach \name/\xpos in {3/-1.5,4/1.5}
	\node[vertex,ball color=gray!20,opaque] (\name) at (\xpos,0) {$\name$};
\node[vertex,ball color=gray!20,opaque] (1) at (0,-3.0) {1};

\path[-] 	(8) 	edge (5)
					edge (7)
			(2) 	edge (6)
			(6) 	edge (7)
			(1) 	edge (3)
			(1) 	edge (4)
			(2) 	edge (3)
			(3) 	edge (6)
			(3) 	edge (4)
			(3) 	edge (8)
			(4) 	edge (8)
			(7) 	edge (3)
			(4) 	edge (7);

\draw		(1) 	to[out=150,in=-60] (2);
\draw		(1) 	to[out=30,in=-120] (5);
\draw		(6)		to[out=30,in=150] (8);
\draw		(2) 	to[out=40,in=140] (5);
			\end{tikzpicture}\quad}
        \end{subfigure}\quad\quad
        \begin{subfigure}[Projected graph $G \mid W$.]{
\begin{tikzpicture}[thick,fill opacity=0.5,scale=0.45,font=\small]
\shade[rounded corners=1ex,fill=red,above] (1.north west) to[out=150,in=-60]
(2.south east) -- (3.base) -- (3.south) -- cycle;
\node[text=red,text opacity=1] at +(-3,0) {$W$};

\foreach \name/\xpos in {2/-6.0,6/-3.0,7/0,8/3.0,5/6.0}
	\node[vertex,ball color=gray!20,opaque] (\name) at (\xpos,3.0) {$\name$};
\foreach \name/\xpos in {3/-1.5,4/1.5}
	\node[vertex,ball color=gray!20,opaque] (\name) at (\xpos,0) {$\name$};
\node[vertex,ball color=gray!20,opaque] (1) at (0,-3.0) {1};

\path[-] 	(8) 	edge (5)
					edge (7)
			(2) 	edge (6)
			(6) 	edge (7)
			(1) 	edge (3)
			(1) 	edge (4)
			(2) 	edge (3)
			(3) 	edge (6)
			(3) 	edge (4)
			(3) 	edge (8)
			(4) 	edge (8)
			(7) 	edge (3)
			(4) 	edge (7);

\draw		(1) 	to[out=150,in=-60] (2);
\draw		(1) 	to[out=30,in=-120] (5);
\draw		(6)		to[out=30,in=150] (8);
\draw		(2) 	to[out=40,in=140] (5);

\draw[very thick,color=red]		(5) 	to[out=145,in=35] (6);
\draw[very thick,color=red]		(4) 	-- (5);
\draw[very thick,color=red]		(4) 	-- (6);
\end{tikzpicture}}
        \end{subfigure}
\caption{Projected graph $G \mid W$ (with 3 false
edges) is obtained after projecting $W = \{ 1, 2, 3 \}$.}
\label{fig:projection}
\end{figure}

\begin{lemma}
$F_W = \{ x \in STAB(G) \mid x_W = 1 \} \subseteq STAB(G \mid W)
\subseteq STAB(G)$.
\label{lem:FSTABG}
\end{lemma}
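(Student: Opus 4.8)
The plan is to prove the two inclusions separately; the right-hand one is immediate, while the left-hand one carries the combinatorial content of the statement.

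For $STAB(G \mid W) \subseteq STAB(G)$, I would first note that the clique projection only \emph{adds} edges, so $E \subseteq E \mid W$. Consequently every stable set of $G \mid W$ is also a stable set of $G$, that is, $\mathcal S(G \mid W) \subseteq \mathcal S(G)$. Taking convex hulls on both sides yields this inclusion at once.

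For $F_W \subseteq STAB(G \mid W)$, I would exploit that $F_W$ is a face of $STAB(G)$. Since $W$ is a clique, every stable set meets $W$ in at most one vertex, so $x_W \leq 1$ is valid for $STAB(G)$ and $F_W$ is exactly the face it supports. Hence $F_W$ is the convex hull of those characteristic vectors $x \in \mathcal S(G)$ with $x_W = 1$, i.e.\ of the stable sets $S$ of $G$ satisfying $|S \cap W| = 1$. Because $STAB(G \mid W)$ is convex, it then suffices to show that each such $S$ is stable in $G \mid W$.

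The crux is to verify that no \emph{false} edge is violated by such an $S$. Fix a false edge $uv \in (E \mid W)\setminus E$, so that $uv \notin E$ and $W \subseteq N(u) \cup N(v)$. I would first observe that neither endpoint lies in $W$: if, say, $u \in W$, then $u \notin N(u)$ and $u \notin N(v)$ (the latter because $uv \notin E$), contradicting $W \subseteq N(u)\cup N(v)$. Now suppose for contradiction that $u, v \in S$, and let $w$ be the unique vertex of $S \cap W$; then $w \notin \{u,v\}$. From $w \in W \subseteq N(u) \cup N(v)$ we obtain $wu \in E$ or $wv \in E$, an edge of $G$ with both endpoints in $S$, contradicting that $S$ is stable in $G$. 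Thus $S$ violates no false edge, and since it already respects every edge of $E$, it is stable in $G \mid W$, completing the inclusion. The main obstacle is precisely this last step: the condition $|S \cap W| = 1$ (equivalently $x_W = 1$) is exactly what guarantees the existence of the required neighbor $w$, which is why the face $F_W$, rather than all of $STAB(G)$, must appear in the statement.
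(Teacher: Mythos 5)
Your proof is correct and follows essentially the same route as the paper's: both reduce to integral points (you via the face/convex-hull argument, the paper by restricting to $\mathcal S(G)$) and both rest on the same key fact that $W \subseteq N(u)\cup N(v)$ for a false edge $uv$ forces any stable set containing both $u$ and $v$ to miss $W$ entirely. The only difference is presentational — you argue directly that each vertex of $F_W$ survives in $STAB(G\mid W)$, while the paper argues contrapositively that an integral stable set violating a false edge has $x_W=0$ — and you are slightly more explicit about the endpoints of a false edge lying outside $W$ and about the second inclusion, which the paper leaves to the definition.
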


\begin{proof}
For the first \ed{inclusion}, we show that $x \in (STAB(G) \setminus STAB(G \mid
W)) \cap \mathcal S(G)$ implies $x \not\in F_W$. For such an $x$, there is a
false edge $wz \in (E \mid W) \setminus E$ such that $x_{\{ w \}} = x_{\{ z \}}
= 1$. By definition of clique projection, $W \subseteq N(w) \cup N(z)$. Hence,
for every $v \in W$, either $v \in N(w)$ or $v \in N(z)$ holds, leading to
$x_W = 0$. The second \ed{inclusion} stems directly from the definition of clique
projection.
\end{proof}

A clique projection of an edge $uv$ is also referred to as {\em edge
projection}. This term is employed in~\cite{Pardalos,Rossi.Smriglio.01}
with a slightly different meaning since, in those papers, the
vertices in $(N(u) \cap N(v)) \cup \{ u, v \}$ are removed when performing the
projection. A fundamental property of edge projection is the following.

\begin{definition}[\cite{Rossi.Smriglio.01}]
An edge $uv \in E$ is {\em projectable} in $G$ if there exists a maximum stable set $S$
in $G$ such that $S \cap \{u, v\} \ne \emptyset$.
\end{definition}

\begin{lemma}[\cite{ManninoSassano96}]
If $uv \in E$ is a projectable edge in $G$, then $\alpha(G) = \alpha(G \mid \{
u, v \})$.
\label{lem:edgeG1}
\end{lemma}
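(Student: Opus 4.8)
The plan is to establish the two inequalities $\alpha(G \mid \{u,v\}) \le \alpha(G)$ and $\alpha(G) \le \alpha(G \mid \{u,v\})$ separately. The first is immediate from Lemma~\ref{lem:FSTABG}: since $STAB(G \mid \{u,v\}) \subseteq STAB(G)$, maximizing the all-ones objective over the smaller polytope yields a value no larger, so $\alpha(G \mid \{u,v\}) \le \alpha(G)$. No extra work is needed in this direction.

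The substance of the lemma lies in the reverse inequality, and this is where projectability enters. I would begin by invoking the hypothesis to fix a maximum stable set $S$ of $G$ with $S \cap \{u,v\} \ne \emptyset$; since $uv \in E$, at most one endpoint of $uv$ lies in $S$, so without loss of generality $u \in S$. The goal is then to show that this same $S$ is already a stable set of the projected graph $G \mid \{u,v\}$. Because $S$ is stable in $G$ and the only edges present in $G \mid \{u,v\}$ but absent in $G$ are the false edges, the whole argument reduces to ruling out that $S$ contains both endpoints of some false edge.

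The key step I would carry out is therefore a short contradiction. Suppose $w, z \in S$ with $wz$ a false edge of $G \mid \{u,v\}$. By the definition of clique projection this means $\{u,v\} \subseteq N(w) \cup N(z)$, and in particular $u \in N(w) \cup N(z)$. But $u$, $w$, and $z$ all belong to the stable set $S$, so $u$ is adjacent in $G$ to neither $w$ nor $z$, i.e.\ $u \notin N(w) \cup N(z)$ --- a contradiction. Hence $S$ spans no false edge, so $S \in \mathcal{S}(G \mid \{u,v\})$, which gives $\alpha(G \mid \{u,v\}) \ge |S| = \alpha(G)$. Combining the two inequalities yields the claimed equality.

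I expect the main (and essentially only) delicate point to be this contradiction step: it is precisely the projectability hypothesis that forces an endpoint of the projected clique, here $u$, to sit inside a maximum stable set, and this is exactly what prevents $S$ from spanning any false edge. The first inclusion and the reduction to false edges are routine once Lemma~\ref{lem:FSTABG} and the definition of clique projection are in hand.
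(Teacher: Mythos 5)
Your proof is correct and follows essentially the same route as the paper: the inequality $\alpha(G \mid \{u,v\}) \le \alpha(G)$ comes from the second inclusion of Lemma~\ref{lem:FSTABG}, and the reverse inequality comes from observing that a maximum stable set meeting $\{u,v\}$ survives the projection. The only cosmetic difference is that you re-derive the ``no false edge inside $S$'' argument from scratch, whereas the paper obtains it by noting that the characteristic vector of $S$ lies in $F_{\{u,v\}}$ and invoking the first inclusion of Lemma~\ref{lem:FSTABG}.
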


\begin{proof}
Since $uv$ is projectable, we get $\alpha(G) \leq \alpha(G \mid \{ u, v \})$. On
the other hand, $\alpha(G) \geq \alpha(G \mid \{ u, v \})$ due to Lemma~\ref{lem:FSTABG}.
\end{proof}

Results presented in~\cite{ManninoSassano96,Rossi.Smriglio.01} yield that if
$N(u)-\{v\}$ is a clique, then $uv$ is projectable in every induced subgraph of
$G$ containing $u$ and $v$. Indeed, in such a situation the \ed{projection of $uv$}
is equivalent to the projection of the clique $\{ u, v \} \cup (N(u)
\cap N(v))$. Thus, define the subgraph $\tilde G$ as the graph obtained from $G$ by
removing the edges connecting $u$ to \ed{all the vertices} in $N(u) \setminus W$,
where $W \subseteq N(u)-\{v\}$ induces a clique in $G$. Lemma~\ref{lem:edgeG1}
can then be used to write
\[
\alpha(\tilde G \mid \{ u, v \}[H]) = \alpha(\tilde G[H]) \geq \alpha(G[H]),
\]
for every $H \subseteq V$ such that $\{ u, v \} \subseteq H$. A direct
consequence is that the rank inequality $x_H \leq \alpha(\tilde G \mid \{ u, v
\}[H])$, valid for $\tilde G \mid \{ u, v \}$, is also valid for $G$.

\section{The Clique Lifting Operation}
\label{sec:lifting}

In this section we lay a lifting operation that can be applied to valid
inequalities of a projected graph to obtain valid inequalities for $STAB(G)$.
Given an inequality
\begin{equation}
c^\top x = \sum_{v \in H} c_v x_v \leq \beta
\label{eq:validineq}
\end{equation}
with $H \subseteq V$, $\beta \in \mathbb{R}$, and $c \in \mathbb{R}^n$ such
that $c_v \ne 0$ if and only if $v \in H$, we say that $H$ is the {\em support
of \eqref{eq:validineq}}. \ed{We are now in position of stating the lifting lemma
on which our cut-generating procedure is based. For ease of presentation, we will
restrict ourselves to a simplified version of this result in terms of the stable
set polytope, and we refer the reader to \cite{XavierCampelo11} for the general
result. In order to keep this work self-contained, we also provide a proof of this
simplified version.}

\begin{lemma}[Simplified version of the Lifting Lemma~\cite{XavierCampelo11}] \label{lem:xc}
Let $W \subseteq V$ be a clique of $G$. If $c^\top x \leq d$, $c \in
\mathbb{R}^n$ and $d \in \mathbb{R}$, is a valid inequality for $F_W=\{ x \in STAB(G) \mid x_W = 1 \}$
with support $H \subseteq V$, then
\begin{equation}
f(x) = (c^\top x - d) - \lambda\left( x_W - 1 \right) \leq 0,
\label{eq:lifted}
\end{equation}
with \ee{the {\em lifting factor} $\lambda$ being such that}
\begin{equation}
\lambda \leq d - \alpha(G[H \setminus W], c),
\label{eq:lambda}
\end{equation}
is a valid inequality for $STAB(G)$. In addition, if $W$ is a maximal clique,
$c^\top x \leq d$ is facet-defining for $F_W$, and
$\lambda$ satisfies~\eqref{eq:lambda} at equality,
then~\eqref{eq:lifted} is facet-defining for $STAB(G)$.
\label{lem:lifting}
\end{lemma}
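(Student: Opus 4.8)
The plan is to prove the two assertions---validity and facet-definingness---separately, using throughout that $STAB(G)$ is full-dimensional (the empty set together with the $n$ singletons yield $n+1$ affinely independent characteristic vectors). For validity, since $f$ is affine it suffices to verify $f(x)\le 0$ at the vertices of $STAB(G)$, i.e. on characteristic vectors $\chi^S$ of stable sets $S$. Because $W$ is a clique, $x_W=\sum_{v\in W}x_v\in\{0,1\}$, so I would split into two cases. If $x_W=1$ then $x\in F_W$, the term $\lambda(x_W-1)$ vanishes, and validity of $c^\top x\le d$ on $F_W$ gives $f(x)=c^\top x-d\le 0$. If $x_W=0$ then no vertex of $W$ lies in $S$, so $c^\top x=\sum_{v\in H\setminus W}c_vx_v$, and the restriction of $x$ to $H\setminus W$ is the characteristic vector of a stable set of $G[H\setminus W]$; hence $c^\top x\le\alpha(G[H\setminus W],c)$. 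Combining this with \eqref{eq:lambda} rewritten as $\alpha(G[H\setminus W],c)\le d-\lambda$ yields $f(x)=c^\top x-d+\lambda\le 0$.

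For the facet claim I would exhibit $n$ affinely independent points on the face $\mathcal F=\{x\in STAB(G)\mid f(x)=0\}$. Since $W$ is a maximal clique, the clique inequality $x_W\le 1$ is facet-defining for $STAB(G)$, so $F_W$ is a facet and $\dim F_W=n-1$. As $c^\top x\le d$ is facet-defining for $F_W$, the set $\mathcal G=\{x\in F_W\mid c^\top x=d\}$ has dimension $n-2$ and thus contains $n-1$ affinely independent points $x^1,\dots,x^{n-1}$; each satisfies $x^i_W=1$ and $c^\top x^i=d$, whence $f(x^i)=0$, so all lie in $\mathcal F$. The decisive extra point comes from the tightness hypothesis on $\lambda$: I would take a maximum weighted stable set $S^\ast$ of $G[H\setminus W]$, so that $\chi^{S^\ast}_W=0$ and $c^\top\chi^{S^\ast}=\alpha(G[H\setminus W],c)=d-\lambda$, giving $f(\chi^{S^\ast})=0$ as well, with $\chi^{S^\ast}\notin F_W$.

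Finally I would verify that $x^1,\dots,x^{n-1},\chi^{S^\ast}$ are affinely independent: applying the linear form $x\mapsto x_W$ to any affine dependence separates $\chi^{S^\ast}$ (value $0$) from the others (value $1$) and forces its coefficient to vanish, after which the affine independence of $x^1,\dots,x^{n-1}$ closes the argument. This produces $n$ affinely independent points on the proper face $\mathcal F$, so $\dim\mathcal F=n-1$ and \eqref{eq:lifted} is facet-defining. I expect the facet direction to be the delicate part: it hinges on $x_W\le 1$ being a genuine facet---hence on the maximality of $W$---to guarantee $\dim F_W=n-1$, and on the equality $\lambda=d-\alpha(G[H\setminus W],c)$ to furnish a face point outside $F_W$; weakening either hypothesis would drop the dimension count below $n-1$.
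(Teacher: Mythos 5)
Your proof is correct and follows essentially the same route as the paper's: validity by the case split on $x_W\in\{0,1\}$ (using validity on $F_W$ in one case and the bound $\lambda\le d-\alpha(G[H\setminus W],c)$ in the other), and facetness by taking $n-1$ affinely independent points from the facet $\{x\in F_W\mid c^\top x=d\}$ of $F_W$ together with the characteristic vector of a maximum weight stable set of $G[H\setminus W]$. The only difference is that you spell out the affine-independence check via the linear form $x\mapsto x_W$, which the paper leaves implicit.
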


\begin{proof}
To \ed{prove} validity, it is sufficient to show that~\eqref{eq:lifted} holds for any
$x \in \mathcal S$. If $x \in F_W$, then $f(x) \leq 0$ holds
because $c^\top x \leq d$ is valid for $F_W$.
Otherwise, $x_W = 0$ and, by definition,
\[
f(x) = (c^\top x - d) + \lambda \leq c^\top x - \alpha(G[H \setminus W], c)
\leq 0.
\]
Now, assume that \ed{$W$ is a maximal clique (so $F_W$ is a facet of $STAB(G)$)
and} $c^\top x \leq d$ is facet-defining for $F_W$, and let $x^1,
\ldots, x^{n-1}$ be $n-1$ affinely independent vectors in $\{ x \in F_W \mid
c^\top x = d \}$. Clearly, $f(x^i) = 0$, for all $i \in \{1,\ldots,n-1\}$.
Additionally, let $x^n$ be the characteristic vector of a maximum weight
independent set of $G[H \setminus W]$ according to $c$. It stems from $\lambda =
d - \alpha(G[H \setminus W], c)$ that
\[
f(x^n) = (c^\top x^n - d) - \lambda\left( x^n_W - 1 \right) = (\alpha(G[H
\setminus W], c) - d) + d - \alpha(G[H \setminus W], c) = 0.
\]
Finally, since $x^n \not\in F_W$, $x^1, \ldots, x^n$ are affinely
independent vectors in $\{ x \in STAB(G) \mid f(x) = 0 \}$.
\end{proof}

\ed{The lifting operation in~\cite{Rossi.Smriglio.01} corresponds to a special
case of Lemma~\ref{lem:xc} in which inequality $c^\top x \leq d$ is a rank inequality of a projected graph's clique with
empty intersection with $W$. More precisely, it includes the projected edge
$uv$, a clique $\tilde W$ in the projected graph $\tilde G$, {\em and} $N(u) \cap N(v)$ such
that $\tilde W \cap N(u) \cap N(v) = \emptyset$, $\tilde W \cap \{ u, v \} =
\emptyset$, and $N(u) \cap N(v)$ is a clique of $\tilde G$ to produce the valid inequality
\begin{equation}
x_{\tilde W} + x_{\{u,v\}} + x_{N(u) \cap N(v)} \leq 2
\label{eq:edgelift}
\end{equation}
for $STAB(G)$.} It is straightforward to check that \ed{Lemma~\ref{lem:xc}}, with
$W = \{ u, v \} \cup (N(u) \cap N(v))$, $c^\top x \leq d$ being the clique inequality of
$\tilde W$, and $\lambda = -1$, establishes that~\eqref{eq:edgelift} is a valid
inequality for $STAB(G)$.

The clique projection operation and the corresponding
clique-lifting operations according to \ed{Lemma~\ref{lem:xc}} lead to
stronger inequalities than those that can be obtained with the edge projection
method proposed in~\cite{Rossi.Smriglio.01}. As an illustration, consider the structure in
Figure~\ref{fig:notrank} and $W = \{ d, e, f \}$. The \ed{projection of $de$} in this
graph \ed{adds the false edge $ac$, and if we then lift the clique $\{ a, b, c \}$ of $G \mid
de$ we get} the rank inequality $x_a + x_b + x_c + x_d + x_e + x_f \leq 2$. The
same inequality is obtained with \ed{Lemma~\ref{lem:xc}} if we take as
$c^\top x \leq d$ the clique inequality of $G + de$ for $\{ a, b, c \}$.
Nevertheless, even in this simple example, there is an inequality that cannot be
derived with the method in~\cite{Rossi.Smriglio.01}. If we take $\{ a, b, c, f
\}$ as the clique inducing set of vertices associated with $c^\top x \leq d$
in \ed{Lemma~\ref{lem:xc}}, then we get $x_a + x_b + x_c + x_d + x_e + 2x_f
\leq 2$ as a valid (indeed, facet-defining~\cite{Campelo.Campos.Correa.08})
inequality for $STAB(G)$.

\begin{figure}[htb]
\centering
        \begin{subfigure}[Not rank inequality.]{
			\input{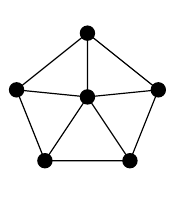_t}
			\label{fig:notrank}}
        \end{subfigure}%
        \qquad
        \begin{subfigure}[Rank inequality.]{
			\input{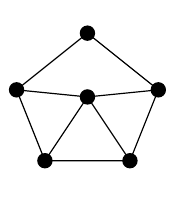_t}
			\label{fig:rank}}
        \end{subfigure}%
\caption{Structures leading to stronger inequalities than edge projection.}
\label{fig:graph}
\end{figure}

The structure in Figure~\ref{fig:rank} (assuming that it induces a rank
inequality of $G$~\cite{Balas.Padberg.76,Campelo.Campos.Correa.08}) also
illustrates the fact that $\sum_{v \in W} x_v \leq 1$ being facet-defining for $STAB(G)$ is not necessary to derive another facet of $STAB(G)$. To show this, we choose $W = \{ d, e \}$ and \ed{again} take the clique inequality of $G + de$ associated with $\{ a, b, c, f \}$. With such a configuration, Lemma~\ref{lem:xc} gives the rank inequality $x_a + x_b + x_c + x_d + x_e + x_f \leq 2$ as well. Observe that this inequality is not derived by the method in~\cite{Rossi.Smriglio.01} if edge $ae$ is deleted before projecting $de$ ($x_b + x_c + x_d + x_e + x_f \leq 2$ would be generated instead).

\section{\ed{Two Procedures for Generating Valid Inequalities}}
\label{sec:procedures}

\ed{We are now in position of introducing the general cut-generating procedures based on the previous definitions and lemmas. We shall introduce two procedures. The first one is a simple procedure directly based on Lemma~\ref{lem:xc}, whereas the second one is a strenghtening based on the results in \cite{XavierCampelo11}. In both procedures, the} generation of a valid inequality consists of the following two steps.

\begin{figure}[htb]
\centering
        \begin{subfigure}[$G_0$ ($= G$ by definition).]{
			\quad\begin{tikzpicture}[thick,fill opacity=0.5,scale=0.45,font=\small]
\foreach \name/\xpos in {2/-6.0,6/-3.0,7/0,8/3.0,5/6.0}
	\node[vertex,ball color=gray!20,opaque] (\name) at (\xpos,3.0) {$\name$};
\foreach \name/\xpos in {3/-1.5,4/1.5}
	\node[vertex,ball color=gray!20,opaque] (\name) at (\xpos,0) {$\name$};
\node[vertex,ball color=gray!20,opaque] (1) at (0,-3.0) {1};

\path[-] 	(8) 	edge (5)
					edge (7)
			(2) 	edge (6)
			(6) 	edge (7)
			(1) 	edge (3)
			(1) 	edge (4)
			(2) 	edge (3)
			(3) 	edge (6)
			(3) 	edge (4)
			(3) 	edge (8)
			(4) 	edge (8)
			(7) 	edge (3)
			(4) 	edge (7);

\draw		(1) 	to[out=150,in=-60] (2);
\draw		(1) 	to[out=30,in=-120] (5);
\draw		(6)		to[out=30,in=150] (8);
\draw		(2) 	to[out=40,in=140] (5);
			\end{tikzpicture}\quad}
        \end{subfigure}\quad\quad
        \begin{subfigure}[$G_3$ ($=G_2$ in this example).]{
        \label{fig:projected}
        \begin{tikzpicture}[thick,fill opacity=0.5,scale=0.45,font=\small]
\shade[rounded corners=1ex,fill=red,above] (1.north west) to[out=150,in=-60]
(2.south east) -- (3.base) -- (3.south) -- cycle;
\node[text=red,text opacity=1] at +(-3,0) {$W_1$};
\shade[rounded corners=1ex,fill=blue,above] (1.north) -- (3.south east) --
(3.base) -- (4.base) -- (4.south west) -- cycle;
\node[text=blue,text opacity=1] at +(0,-1) {$W_2$};
\shade[rounded corners=1ex,fill=green,above] (1.north east) to[out=30,in=-120]
(5.south west) -- (4.base) -- (4.south) -- cycle;
\node[text=olive,text opacity=1] at +(3,0) {$W_3$};

\foreach \name/\xpos in {2/-6.0,6/-3.0,7/0,8/3.0,5/6.0}
	\node[vertex,ball color=gray!20,opaque] (\name) at (\xpos,3.0) {$\name$};
\foreach \name/\xpos in {3/-1.5,4/1.5}
	\node[vertex,ball color=gray!20,opaque] (\name) at (\xpos,0) {$\name$};
\node[vertex,ball color=gray!20,opaque] (1) at (0,-3.0) {1};

\path[-] 	(8) 	edge (5)
					edge (7)
			(2) 	edge (6)
			(6) 	edge (7)
			(1) 	edge (3)
			(1) 	edge (4)
			(2) 	edge (3)
			(3) 	edge (6)
			(3) 	edge (4)
			(3) 	edge (8)
			(4) 	edge (8)
			(7) 	edge (3)
			(4) 	edge (7);

\draw		(1) 	to[out=150,in=-60] (2);
\draw		(1) 	to[out=30,in=-120] (5);
\draw		(6)		to[out=30,in=150] (8);
\draw		(2) 	to[out=40,in=140] (5);

\draw[very thick,color=red]		(5) 	to[out=145,in=35] (6);
\draw[very thick,color=red]		(4) 	-- (5);
\draw[very thick,color=red]		(4) 	-- (6);

\draw[very thick,color=blue]	(2) 	to[out=30,in=150] (7);
\draw[very thick,color=blue]	(5) 	to[out=150,in=30] (7);
\draw[very thick,color=blue]	(2) 	to[out=35,in=145] (8);
\end{tikzpicture}}
        \end{subfigure}
\caption{Projected graph $G_3$ (with false
edges) is obtained after projecting $W_1 = \{ 1, 2, 3 \}$, $W_2 = \{ 1, 3, 4
\}$, and $W_3 = \{ 1, 4, 5 \}$, in this order.}
\label{fig:projection2}
\end{figure}
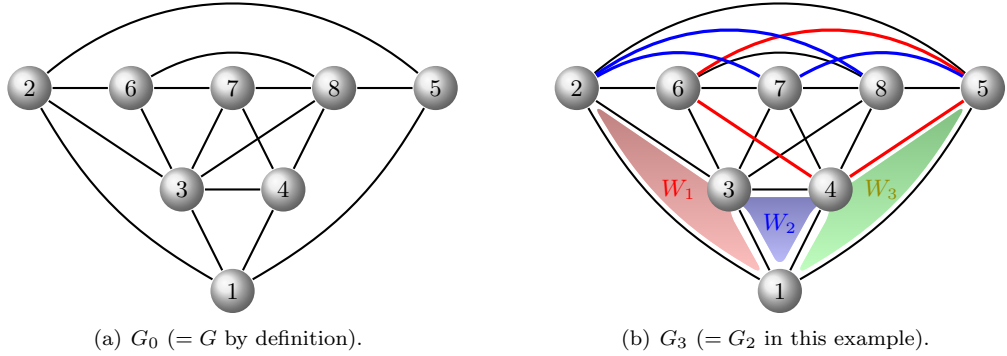

\paragraph{Step 1: Sequence of clique projections} Define $G_0 := G$
and determine distinct subsets $W_1, \ldots, W_r$ of $V$ such that, for
every $t \in \{ 1, \ldots, r \}$, $W_t$ is a clique of $G_{t-1}$ and $G_t$ is
the projected graph $G_{t-1} \mid W_t$. An illustration
of \ed{such a sequence} is depicted in \ed{Figure~\ref{fig:projection2}}.

\paragraph{Step 2: Sequence of clique lifting operations}
\label{sec:clique-lift} Let $W_{r+1} \subseteq V$ be a clique of
$G_r$. The lifting procedure starts with the clique inequality $f_r(x) =
x_{W_{r+1}} \leq 1$, which is valid for $STAB(G_r)$, and iteratively for $t=r-1, \ldots, 0$
\ed{applies \ee{a specific version of} Lemma~\ref{lem:xc} in order to generate}
$f_t(x) = f_{t+1}(x) + \lambda_{t+1} (x_{W_{t+1}} - 1) \leq 1$ in such a way that $f_0(x) \leq 1$ is valid for $STAB(G_0=G)$.

\subsection{Basic Procedure}

\ee{The basic specific version of Lemma~\ref{lem:xc}, stated below, leads the
general method to generate valid inequalities for all projected graphs $G_0,
\ldots, G_r$. For $t \in \{ 1, \ldots, r\}$, let
\[
F_{W_t} = \{ x \in STAB(G_{t-1}) \mid x_{W_t} = 1 \}.
\]}

\begin{lemma}
The inequality
\begin{equation}
x_{W_{r+1}} + \sum_{t=1}^r \lambda^B_t (x_{W_t}
- 1) \leq 1
\label{eq:cliqueproj}
\end{equation}
is valid for $STAB(G)$, where $P_t := \left\{ x \in STAB(G_{t-1}) \mid x_{W_t} =
0 \right\}$ and
\[
\lambda^B_t := \max \left\{ x_{W_{r+1}} + \sum_{i=t+1}^r \lambda^B_i (x_{W_i} -
1) \mid x \in P_t \right\} - 1.
\]
\label{lem:cliqueproj}
\end{lemma}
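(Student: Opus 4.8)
The plan is to prove the stronger statement that, for every $t \in \{0,1,\ldots,r\}$, the inequality $f_t(x) \le 1$ is valid for $STAB(G_t)$, where I set
\[
f_t(x) := x_{W_{r+1}} + \sum_{i=t+1}^r \lambda^B_i (x_{W_i} - 1),
\]
so that $f_t$ depends only on $\lambda^B_{t+1},\ldots,\lambda^B_r$ (the quantities defined ``before'' step $t$). With this notation $f_r(x) = x_{W_{r+1}}$, one has the recursion $f_{t-1}(x) = f_t(x) + \lambda^B_t(x_{W_t}-1)$, the function $f_0(x)$ is exactly the left-hand side of~\eqref{eq:cliqueproj}, and the defining identity for the lifting factor reads $\lambda^B_t = \max\{ f_t(x) \mid x \in P_t \} - 1$. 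Taking $t = 0$ then gives the claim.

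I would argue by downward induction on $t$, from $t = r$ to $t = 0$. The base case $t = r$ is immediate: $f_r(x) = x_{W_{r+1}} \le 1$ is the clique inequality of $W_{r+1}$, valid for $STAB(G_r)$ because $W_{r+1}$ is a clique of $G_r$. For the inductive step, suppose $f_t(x) \le 1$ is valid for $STAB(G_t)$. Since $G_t = G_{t-1} \mid W_t$, Lemma~\ref{lem:FSTABG} supplies the nesting $F_{W_t} = \{ x \in STAB(G_{t-1}) \mid x_{W_t} = 1 \} \subseteq STAB(G_t)$, so $f_t(x) \le 1$ is in particular valid for $F_{W_t}$. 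I would then invoke Lemma~\ref{lem:xc} with the graph $G_{t-1}$, the clique $W_t$, and the valid inequality $f_t(x) \le 1$ (after moving its constant term into the right-hand side to match the homogeneous form $c^\top x \le d$ of the lemma), choosing the lifting factor so that the output is precisely $f_{t-1}(x) = f_t(x) + \lambda^B_t(x_{W_t}-1) \le 1$; this identifies the lemma's $\lambda$ with $-\lambda^B_t$. The conclusion is that $f_{t-1}(x) \le 1$ is valid for $STAB(G_{t-1})$, advancing the induction.

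The one point that requires care is verifying that the prescribed $\lambda^B_t$ satisfies the admissibility condition~\eqref{eq:lambda} of Lemma~\ref{lem:xc}. Writing $f_t(x) = c_t^\top x - \delta_t$, with $c_t$ the coefficient vector and $\delta_t$ the constant of $f_t$, and unwinding the substitution $\lambda = -\lambda^B_t$, this reduces to the inequality $\max\{ c_t^\top x \mid x \in P_t \} \ge \alpha(G_{t-1}[H_t \setminus W_t], c_t)$, where $H_t$ is the support of $f_t$. This holds for a structural reason that is essentially free: every stable set of $G_{t-1}[H_t \setminus W_t]$ is a stable set of $G_{t-1}$ that avoids $W_t$, hence (extended by zeros) a feasible point of $P_t$ on which $c_t^\top x$ attains its $c_t$-weight; therefore maximizing over the larger set $P_t$ dominates the induced-subgraph optimum. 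I expect this --- together with keeping track of the signs and of the constant $\delta_t$ when passing between the homogeneous statement $c^\top x \le d$ of Lemma~\ref{lem:xc} and the affine functions $f_t$ --- to be the main, though mild, obstacle.

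As an alternative that sidesteps the bookkeeping entirely, I would note that the inductive step can be proved directly on integral points, since validity there propagates to the convex hull $STAB(G_{t-1})$. For $x \in \mathcal S(G_{t-1})$ one has $x_{W_t} \in \{0,1\}$ because $W_t$ is a clique. If $x_{W_t} = 1$, then $x \in F_{W_t} \subseteq STAB(G_t)$ and $f_{t-1}(x) = f_t(x) \le 1$ by the induction hypothesis. If $x_{W_t} = 0$, then $x \in P_t$ and $f_{t-1}(x) = f_t(x) - \lambda^B_t = f_t(x) + 1 - \max\{ f_t(y) \mid y \in P_t \} \le 1$. This case split reproduces the content of Lemma~\ref{lem:xc} in the present special setting and makes the choice of $\lambda^B_t$ fully transparent.
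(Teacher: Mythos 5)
Your proof is correct and follows essentially the same route as the paper's: a downward induction that applies Lemma~\ref{lem:xc} at each step $t=r,\dots,1$, using the first inclusion of Lemma~\ref{lem:FSTABG} to transfer validity from $STAB(G_t)$ to $F_{W_t}$, with $\lambda^B_t$ checked against condition~\eqref{eq:lambda}. You merely make explicit the sign bookkeeping and the admissibility verification that the paper's terse proof leaves implicit, and your closing case split on integral points is a transparent restatement of the same lifting step rather than a genuinely different argument.
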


\begin{proof}
\ed{This result is obtained by iteratively applying Lemma~\ref{lem:xc} on $P_t$,
for $t=r,\dots,1$ (i.e, in reverse order). At step $t$, \ee{the first inclusion
of Lemma~\ref{lem:FSTABG} assures that $x_{W_{t+1}} \leq 1$ is valid for
$F_{W_{t+1}}$. Hence, being the lifting factor $\lambda^B_t$} calculated
according to the definition in \eqref{eq:lambda}, the obtained inequality is valid.}
\end{proof}

\ed{Consider the projected graph $G_3$ in Figure~\ref{fig:projected}, and let
$W_4 = \{ 2, 5, 6, 7, 8 \}$. The inequality $x_{W_4} \leq 1$ is trivially valid
for $STAB(G_3)$, and is also valid} for $STAB(G_2)$ since \ee{$\lambda^B_3 = 0$}.
This comes from \ee{$\lambda^B_3 = \max \left\{ x_{W_4} \mid x_{W_3} = 0 \right\}
- 1$}, which has $\{ 8 \}$ as an \ed{optimal} solution. The remaining lifting
operations are illustrated in Figure~\ref{fig:basiclifting}, \ed{finally giving rise to the inequality $x_{\{4,5,6,7,8\}} + 2x_{\{1,2,3\}} \leq 3$, which is valid for $STAB(G)$.}

\begin{figure}[htbp]
\centering
        \begin{subfigure}[Lifting
        $x_{W_4} \leq 1$ with
        $\lambda^B_2 = \max \left\{ x_{W_4} \mid x_{W_2} = 0 \right\} -
        1 = 1$ results in $x_{W_4} + x_{W_2} \leq 2$, valid for
        $STAB(G_1)$.]{		\label{sfig:cl2}
        \quad\begin{tikzpicture}[thick,fill
        opacity=0.5,scale=0.45,font=\small]
\shade[rounded corners=1ex,fill=red,above] (1.north west) to[out=150,in=-60]
(2.south east) -- (3.base) -- (3.south) -- cycle;
\node[text=red,text opacity=1] at +(-3,0) {$W_1$};
\shade[rounded corners=1ex,fill=blue,above] (1.north) -- (3.south east) --
(3.base) -- (4.base) -- (4.south west) -- cycle;
\node[text=blue,text opacity=1] at +(0,-1) {$W_2$};
\shade[rounded corners=1ex,fill=orange,above] (2.north west) --
(5.north east) -- (5.south east) -- (2.south west) -- cycle;

\foreach \name/\xpos in {2/-6.0,6/-3.0,7/0,8/3.0,5/6.0}
	\node[vertex,ball color=gray!20,opaque] (\name) at (\xpos,3.0) {$\name$};
\foreach \name/\xpos in {3/-1.5,4/1.5}
	\node[vertex,ball color=gray!20,opaque] (\name) at (\xpos,0) {$\name$};
\node[vertex,ball color=gray!20,opaque] (1) at (0,-3.0) {1};

\path[-] 	(8) 	edge (5)
					edge (7)
			(2) 	edge (6)
			(6) 	edge (7)
			(1) 	edge (3)
			(1) 	edge (4)
			(2) 	edge (3)
			(3) 	edge (6)
			(3) 	edge (4)
			(3) 	edge (8)
			(4) 	edge (8)
			(7) 	edge (3)
			(4) 	edge (7);

\draw		(1) 	to[out=150,in=-60] (2);
\draw		(1) 	to[out=30,in=-120] (5);
\draw		(6)		to[out=30,in=150] (8);
\draw		(2) 	to[out=40,in=140] (5);

\node[vertex,ball color=blue,opaque] (2) at (-6.0,3.0) {2};
\node[vertex,ball color=blue,opaque] (7) at (0,3.0) {7};

\draw[very thick,color=red]		(5) 	to[out=145,in=35] (6);
\draw[very thick,color=red]		(5) 	to[out=150,in=30] (7);
\draw[very thick,color=red]		(4) 	-- (5);
\draw[very thick,color=red]		(4) 	-- (6);
\end{tikzpicture}\quad}
        \end{subfigure}\quad\quad
        \begin{subfigure}[{The \ed{optimal} solution $\{ 4,5,6\}$ of $\max
        \left\{ x_{W_4} + x_{W_2} \mid x_{W_1} = 0 \right\}$ yields
         $\lambda^B_1 = 1$ and $x_{W_4} + x_{W_2} + x_{W_1} \leq 3$
        valid for $STAB(G_0)$}.]{ \label{sfig:cl3}
        \quad\begin{tikzpicture}[thick,fill
        opacity=0.5,scale=0.45,font=\small]
\shade[rounded corners=1ex,fill=orange,above] (2.north west) --
(5.north east) -- (5.south west) to[out=-120,in=30] (1.north east) -- (1.north
west) to[out=150,in=-60] (2.south east) -- cycle;
\shade[rounded corners=1ex,fill=red,above] (1.north west) to[out=150,in=-60]
(2.south east) -- (3.base) -- (3.south) -- cycle;
\node[text=red,text opacity=1] at +(-3,0) {$W_1$};

\foreach \name/\xpos in {2/-6.0,6/-3.0,7/0,8/3.0,5/6.0}
	\node[vertex,ball color=gray!20,opaque] (\name) at (\xpos,3.0) {$\name$};
\foreach \name/\xpos in {3/-1.5,4/1.5}
	\node[vertex,ball color=gray!20,opaque] (\name) at (\xpos,0) {$\name$};
\node[vertex,ball color=gray!20,opaque] (1) at (0,-3.0) {1};

\path[-] 	(8) 	edge (5)
					edge (7)
			(2) 	edge (6)
			(6) 	edge (7)
			(1) 	edge (3)
			(1) 	edge (4)
			(2) 	edge (3)
			(3) 	edge (6)
			(3) 	edge (4)
			(3) 	edge (8)
			(4) 	edge (8)
			(7) 	edge (3)
			(4) 	edge (7);

\draw		(1) 	to[out=150,in=-60] (2);
\draw		(1) 	to[out=30,in=-120] (5);
\draw		(6)		to[out=30,in=150] (8);
\draw		(2) 	to[out=30,in=150] (5);

\node[vertex,ball color=red,opaque] (4) at (1.5,0) {4};
\node[vertex,ball color=red,opaque] (6) at (6.0,3.0) {5};
\node[vertex,ball color=red,opaque] (6) at (-3.0,3.0) {6};
\end{tikzpicture}\quad}
        \end{subfigure}
\caption{Basic lifting procedure starting
with the clique inequality of $W_4 = \{ 2, 5, 6, 7, 8 \}$ to generate
$x_{\{4,5,6,7,8\}} + 2x_{\{1,2,3\}} \leq 3$.}
\label{fig:basiclifting}
\end{figure}
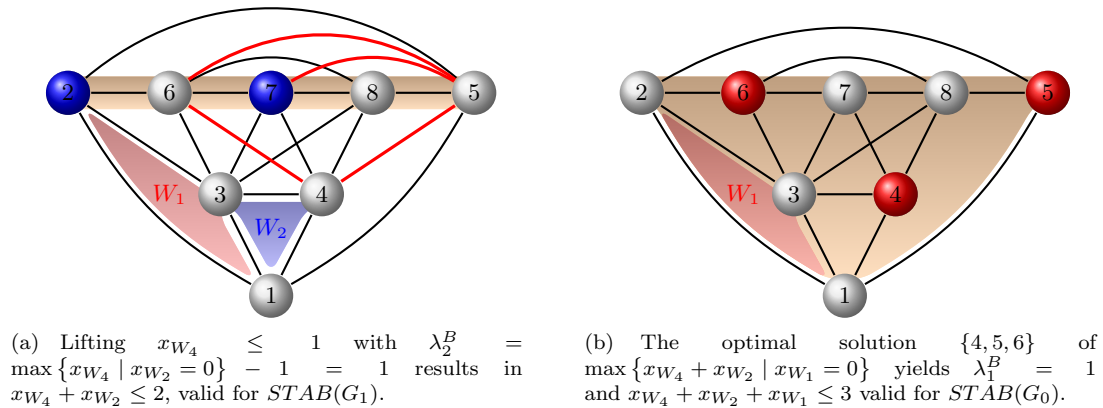

\subsection{Strengthened Procedure}

Let $F_0 := STAB(G)$ and, for $t \in \{ 1, \ldots, r \}$,
\[
F_t = \{ x \in F_{t-1} \mid x_{W_t} = 1 \} = \{ x \in
STAB(G) \mid x_{W_j} = 1, j = 1, \ldots, t \}.
\]
Clearly, the integral elements of $F_t$ are stable sets of $G$. A property
of the clique projection operation is that they are also stable sets of
$G_t$. Define \ed{$E_t$ as the edge set of $G_t$, for $t=1,\dots,r$.}

\begin{lemma}
$F_t \subseteq STAB(G_t)$, for all $t \in \{ 0, \ldots, r \}$.
\label{lem:FtSTABGt}
\end{lemma}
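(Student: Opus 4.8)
The plan is to prove the statement by induction on $t$, using the single-step containment of Lemma~\ref{lem:FSTABG} as the engine driving the induction. The key observation is that the face $F_t$ is obtained from $F_{t-1}$ by imposing exactly one more equation $x_{W_t}=1$, and that $W_t$ is a clique of $G_{t-1}$ with $G_t = G_{t-1}\mid W_t$; this is precisely the configuration to which Lemma~\ref{lem:FSTABG} applies, with $G_{t-1}$ playing the role of $G$ and $W_t$ the role of $W$.

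For the base case $t=0$, I would simply note that $F_0 = STAB(G) = STAB(G_0)$ by definition, so the inclusion holds with equality. For the inductive step, I would assume $F_{t-1}\subseteq STAB(G_{t-1})$ and rewrite $F_t = \{x \in F_{t-1}\mid x_{W_t}=1\}$. Applying the induction hypothesis inside the braces gives
\[
F_t \;\subseteq\; \{x \in STAB(G_{t-1}) \mid x_{W_t}=1\} \;=\; F_{W_t}.
\]
Since $W_t$ is a clique of $G_{t-1}$ by the construction in Step~1, the first inclusion of Lemma~\ref{lem:FSTABG}, instantiated on $G_{t-1}$, yields $F_{W_t} \subseteq STAB(G_{t-1}\mid W_t) = STAB(G_t)$. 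Chaining the two containments gives $F_t \subseteq STAB(G_t)$, closing the induction.

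The argument is essentially a telescoping of Lemma~\ref{lem:FSTABG}, so I do not anticipate a genuine obstacle; the whole proof reduces to verifying that its hypotheses hold at each intermediate graph. The one point that must be handled with care is bookkeeping of which polytope each face lives in: the set $F_{W_t}$ is a face of $STAB(G_{t-1})$, \emph{not} of $STAB(G)$, and it is exactly the validity of Lemma~\ref{lem:FSTABG} \emph{on the intermediate graph} $G_{t-1}$ that permits the passage from $STAB(G_{t-1})$ to $STAB(G_t)$. As long as one records that $W_t$ is a clique of $G_{t-1}$ (guaranteed by the defining property of the projection sequence) and that $G_t = G_{t-1}\mid W_t$, the inductive step goes through without further work.
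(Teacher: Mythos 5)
Your proof is correct and follows essentially the same route as the paper's: induction on $t$, with the base case $F_0 = STAB(G_0)$ and the inductive step reducing $F_t \subseteq F_{W_t} \subseteq STAB(G_t)$ via the first inclusion of Lemma~\ref{lem:FSTABG} applied to $G_{t-1}$ and the clique $W_t$. The only (immaterial) difference is that the paper phrases the induction in terms of the integral points $F_t \cap \mathcal S(G)$, whereas you work directly with the polytopes by invoking Lemma~\ref{lem:FSTABG} as a statement about $STAB(G_{t-1})$; both are valid.
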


\begin{proof}
We use induction on $t$ to show that $F_t \cap \mathcal S(G) \subseteq STAB(G_t)
\cap \mathcal S(G)$. The \ed{case} $t = 0$ is trivial. For $t \geq 1$,
\ee{$F_{t-1} \subseteq STAB(G_{t-1})$ by the induction hypothesis and,
consequently, $F_t = F_{t-1} \cap F_{W_t}$. The results follow from the first
inclusion of Lemma~\ref{lem:FSTABG}}.
\end{proof}

By definition, $x_{W_t} \leq 1$ is valid for $STAB(G_{t-1})$. The previous lemma
implies that it is also valid for the stable sets of $G$ that \ed{intersect} $W_1, \ldots, W_{t-1}$.

\begin{corollary}
$x_{W_t} \leq 1$ is valid for $F_{t-1}$.
\label{cor:WtvalidFt1}
\end{corollary}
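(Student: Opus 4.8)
The plan is to obtain the result as an immediate consequence of the clique structure of $W_t$ together with the polyhedral containment established in Lemma~\ref{lem:FtSTABGt}. Since the inequality in question is nothing more than a clique inequality on the appropriate projected graph, no genuinely new argument is required: the substantive work has already been carried out in proving $F_t \subseteq STAB(G_t)$.

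First I would recall from Step~1 that $W_t$ is, by construction, a clique of $G_{t-1}$. For any clique in a graph, a stable set can contain at most one of its vertices, so the clique inequality $x_{W_t} = \sum_{v \in W_t} x_v \leq 1$ is valid for $STAB(G_{t-1})$.

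Next I would apply Lemma~\ref{lem:FtSTABGt} with index $t-1$ to obtain $F_{t-1} \subseteq STAB(G_{t-1})$. Chaining the two observations, every point $x \in F_{t-1}$ lies in $STAB(G_{t-1})$, where $x_{W_t} \leq 1$ holds; hence $x_{W_t} \leq 1$ is valid for $F_{t-1}$, as claimed.

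The only point requiring care --- and the closest thing to an obstacle --- is keeping track of which graph the clique property refers to: $W_t$ is a clique of $G_{t-1}$ rather than of the original graph $G$, so a priori the clique inequality is valid only for $STAB(G_{t-1})$. Lemma~\ref{lem:FtSTABGt} is precisely the bridge that transports this validity back to the face $F_{t-1}$ of the original stable set polytope, and without it the statement would not follow directly.
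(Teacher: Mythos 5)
Your argument is correct and coincides with the paper's own reasoning: the text preceding the corollary notes that $x_{W_t} \leq 1$ is valid for $STAB(G_{t-1})$ by definition of $W_t$ as a clique of $G_{t-1}$, and then invokes Lemma~\ref{lem:FtSTABGt} (with index $t-1$) to transfer this validity to $F_{t-1}$. Your remark about tracking which graph the clique lives in is exactly the right point of care, but there is nothing further to add.
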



Our strengthened lifting procedure is as
follows. We assume that $\max \emptyset = 0$.

\begin{lemma}
Let $c^\top x \leq d$, $c,x \in \mathbb{R}^n$ and $d \in
\mathbb{R}$, be a valid inequality for $STAB(G_r)$.
Then, $f_t(x) \leq d$ is valid for $F_t$, where, for $t \in \{ 0,
\ldots, r \}$,
\ee{\[
f_t(x) = c^\top x + \sum_{\ell=t+1}^r \lambda^S_\ell (x_{W_\ell} - 1),
\]}
$\mathcal S_t = \mathcal S(G) \cap F_t$, and
\ee{\[
\lambda^S_\ell = \max \Big\{ f_\ell(x) - d \mid
x \in \mathcal S_{\ell-1}, x_{W_\ell} = 0 \Big\}.
\]}
\label{lem:strenglifting}
\end{lemma}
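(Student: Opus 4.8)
The plan is to prove this by descending induction on $t$, from $t=r$ down to $t=0$, mirroring the structure of the proof of Lemma~\ref{lem:cliqueproj} but now working over the nested sets $F_t$ rather than the single-constraint faces $F_{W_t}$. The base case $t=r$ is immediate: here the sum $\sum_{\ell=r+1}^r$ is empty, so $f_r(x) = c^\top x$, and the statement ``$c^\top x \leq d$ is valid for $F_r$'' follows from the hypothesis that $c^\top x \leq d$ is valid for $STAB(G_r)$ together with Lemma~\ref{lem:FtSTABGt}, which gives $F_r \subseteq STAB(G_r)$. So the heart of the argument is the inductive step.

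For the inductive step, I would assume that $f_{t+1}(x) \leq d$ is valid for $F_{t+1}$ and aim to show $f_t(x) \leq d$ is valid for $F_t$. Observe that $f_t(x) = f_{t+1}(x) + \lambda^S_{t+1}(x_{W_{t+1}} - 1)$, so this is exactly a single application of the lifting Lemma~\ref{lem:xc}, but applied ``relative to'' $F_t$ rather than to $STAB(G)$. The key is to verify the two ingredients Lemma~\ref{lem:xc} needs. First, by Corollary~\ref{cor:WtvalidFt1}, the inequality $x_{W_{t+1}} \leq 1$ is valid for $F_t$, which plays the role of the clique-face constraint defining the restriction. Second, I must check the lifting-factor condition \eqref{eq:lambda}. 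Since it suffices to verify validity on the integral points, I would split any $x \in \mathcal S_t = \mathcal S(G) \cap F_t$ into two cases according to whether $x_{W_{t+1}} = 1$ or $x_{W_{t+1}} = 0$. If $x_{W_{t+1}} = 1$, then $x \in F_{t+1}$, so the penalty term vanishes and $f_t(x) = f_{t+1}(x) \leq d$ by the induction hypothesis. If $x_{W_{t+1}} = 0$, then $f_t(x) = f_{t+1}(x) - \lambda^S_{t+1}$, and I must show this is at most $d$, i.e. that $f_{t+1}(x) - d \leq \lambda^S_{t+1}$.

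This last inequality is exactly where the definition of $\lambda^S_{t+1}$ does its work, and it is the step I expect to require the most care. The definition sets $\lambda^S_{t+1} = \max\{ f_{t+1}(x) - d \mid x \in \mathcal S_t,\ x_{W_{t+1}} = 0\}$, so the desired bound holds for every integral $x \in \mathcal S_t$ with $x_{W_{t+1}} = 0$ essentially by construction of the maximum. The subtle point to get right is that the maximum is taken over $\mathcal S_t$ (the integral points of $F_t$ with $x_{W_{t+1}}=0$), which is precisely the set I am quantifying over in this case, so the match is exact; I would also note the convention $\max \emptyset = 0$ covers the degenerate case where no such $x$ exists, keeping the penalty term harmless. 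Combining both cases, $f_t(x) \leq d$ holds on all of $\mathcal S_t$, and since $F_t = \mathrm{conv}(\mathcal S_t)$ is the convex hull of these integral points (as $F_t$ is a face of the integral polytope $STAB(G)$ obtained by fixing coordinates, its vertices are among $\mathcal S_t$), validity extends from the integral points to all of $F_t$ by linearity. The main obstacle, then, is not any hard computation but rather being careful that the set over which $\lambda^S_{t+1}$ is defined coincides exactly with the case being bounded, and that the reduction from $F_t$ to its integral points $\mathcal S_t$ is justified.
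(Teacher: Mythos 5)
Your proposal is correct and follows essentially the same route as the paper's proof: descending induction from $t=r$ (using Lemma~\ref{lem:FtSTABGt} for the base case), with the inductive step resting on Corollary~\ref{cor:WtvalidFt1} and the lifting argument of Lemma~\ref{lem:xc} applied with $F_t$ in the role of $STAB(G)$ and $F_{t+1}$ in the role of $F_W$. The only difference is that you inline the two-case analysis on integral points and the convex-hull extension instead of citing the Lifting Lemma directly, which is exactly the content of that lemma's proof.
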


\begin{proof}
We show that $f_t(x) \leq d$ is valid for $F_t$ by induction on $t$. For $t
= r$, the result follows since $f_r(x) = c^\top x \leq d$ is valid for
$STAB(G_r)$ and $F_r \subseteq STAB(G_r)$ by Lemma~\ref{lem:FtSTABGt}.
For $t < r$, by induction hypothesis, $f_{t+1}(x) \leq d$ is valid for
$F_{t+1} = \{ x \in F_t \mid x_{W_{t+1}} = 1 \}$. Applying the inequality
construction, we get \ee{$\lambda^S_{t+1} = 0$} if $\{ x \in \mathcal S_t \mid
x_{W_{t+1}} = 0 \} = \emptyset$, \ed{and}
\ee{\[
\lambda^S_{t+1} = \max \left\{ c^\top x + \sum_{i=t+2}^r \lambda^S_i (x_{W_i} -
1) \mid \begin{array}{l} x \in \mathcal S_t, \\ x_{W_{t+1}} = 0 \end{array} \right\} -
d
\]}
\ed{otherwise.} We now apply Lemma~\ref{lem:lifting}, considering that $x_{W_{t+1}} \leq 1$
is valid for \ee{$F_t$ by Corollary~\ref{cor:WtvalidFt1}} and $f_{t+1}(x) \leq
d$ is valid for $F_{t+1}$, and then obtain that $f_t(x)
\leq d$ is valid for $F_t$.
\end{proof}

Let us take Figure~\ref{fig:stseqlifting} as an example of a sequence of $r = 3$
clique liftings of the projected graph depicted in Figure~\ref{fig:projected}. For
$t=3$, lifting $f_3(x) = x_{W_4} \leq 1$ with $\lambda^S_3 = -1$ generates
$f_2(x) = x_{\{2,6,7,8\}} - x_{\{1,4\}} + 1 \leq 1$, which is valid for $F_2 = \{ x \in
STAB(G) \mid x_{W_1} = x_{W_2} = 1 \}$. \ed{The} iterative procedure of
Lemma~\ref{lem:strenglifting} may generate stronger valid inequalities than the
\ed{basic procedure in Lemma~\ref{lem:cliqueproj}}. For instance, for the example in Figure~\ref{fig:basiclifting}, $x_{\{ 4, 5, 6, 7, 8 \}} +
2x_{\{1, 2, 3 \}} \leq 3$ is a linear combination between $x_{\{1, 2, 4, 6,
7, 8\}} + 2x_{\{3\}} \leq 2$ and the clique inequality $x_{\{1,2,5\}} \leq
1$..

\begin{figure}[htbp]
\centering
        \begin{subfigure}[Lifting
        {$f_2(x) \leq
        1$} with $\lambda^S_2 = 2$ gives {$f_1(x) = x_{\{1,2,4,6,7,8\}} +
        2x_{\{3\}} \leq 2$}, valid for $F_1 = \{ x \in STAB(G) \mid x_{W_1} = 1 \}$.]{ \label{sfig:scl2}
        \quad\begin{tikzpicture}[thick,fill
        opacity=0.5,scale=0.45,font=\small]
\shade[rounded corners=1ex,fill=red,above] (1.north west) to[out=150,in=-60]
(2.south east) -- (3.base) -- (3.south) -- cycle;
\node[text=red,text opacity=1] at +(-3,0) {$W_1$};
\shade[rounded corners=1ex,fill=blue,above] (1.north) -- (3.south east) --
(3.base) -- (4.base) -- (4.south west) -- cycle;
\node[text=blue,text opacity=1] at +(0,-1) {$W_2$};
\shade[rounded corners=1ex,fill=orange,above] (2.north west) --
(5.north east) -- (5.south east) -- (2.south west) -- cycle;

\foreach \name/\xpos in {2/-6.0,6/-3.0,7/0,8/3.0,5/6.0}
	\node[vertex,ball color=gray!20,opaque] (\name) at (\xpos,3.0) {$\name$};
\foreach \name/\xpos in {3/-1.5,4/1.5}
	\node[vertex,ball color=gray!20,opaque] (\name) at (\xpos,0) {$\name$};
\node[vertex,ball color=gray!20,opaque] (1) at (0,-3.0) {1};

\path[-] 	(8) 	edge (5)
					edge (7)
			(2) 	edge (6)
			(6) 	edge (7)
			(1) 	edge (3)
			(1) 	edge (4)
			(2) 	edge (3)
			(3) 	edge (6)
			(3) 	edge (4)
			(3) 	edge (8)
			(4) 	edge (8)
			(7) 	edge (3)
			(4) 	edge (7);

\draw		(1) 	to[out=150,in=-60] (2);
\draw		(1) 	to[out=30,in=-120] (5);
\draw		(6)		to[out=30,in=150] (8);
\draw		(2) 	to[out=40,in=140] (5);

\node[vertex,ball color=blue,opaque] (2) at (-6.0,3.0) {2};
\node[vertex,ball color=blue,opaque] (7) at (0,3.0) {7};

\draw[very thick,color=red]		(5) 	to[out=145,in=35] (6);
\draw[very thick,color=red]		(5) 	to[out=150,in=30] (7);
\draw[very thick,color=red]		(4) 	-- (5);
\draw[very thick,color=red]		(4) 	-- (6);
\end{tikzpicture}\quad}
        \end{subfigure}\quad\quad
        \begin{subfigure}[{$f_0(x) = f_1(x) \leq 2$ is also
        valid for $F_0$ since $\lambda^S_1 = 0$}.]{		\label{sfig:scl3}
        \quad\begin{tikzpicture}[thick,fill
        opacity=0.5,scale=0.45,font=\small]
\shade[rounded corners=1ex,fill=orange,above] (2.north west) --
(5.north east) -- (5.south west) to[out=-120,in=30] (1.north east) -- (1.north
west) to[out=150,in=-60] (2.south east) -- cycle;
\shade[rounded corners=1ex,fill=red,above] (1.north west) to[out=150,in=-60]
(2.south east) -- (3.base) -- (3.south) -- cycle;
\node[text=red,text opacity=1] at +(-3,0) {$W_1$};

\foreach \name/\xpos in {2/-6.0,6/-3.0,7/0,8/3.0,5/6.0}
	\node[vertex,ball color=gray!20,opaque] (\name) at (\xpos,3.0) {$\name$};
\foreach \name/\xpos in {3/-1.5,4/1.5}
	\node[vertex,ball color=gray!20,opaque] (\name) at (\xpos,0) {$\name$};
\node[vertex,ball color=gray!20,opaque] (1) at (0,-3.0) {1};

\path[-] 	(8) 	edge (5)
					edge (7)
			(2) 	edge (6)
			(6) 	edge (7)
			(1) 	edge (3)
			(1) 	edge (4)
			(2) 	edge (3)
			(3) 	edge (6)
			(3) 	edge (4)
			(3) 	edge (8)
			(4) 	edge (8)
			(7) 	edge (3)
			(4) 	edge (7);

\draw		(1) 	to[out=150,in=-60] (2);
\draw		(1) 	to[out=30,in=-120] (5);
\draw		(6)		to[out=30,in=150] (8);
\draw		(2) 	to[out=30,in=150] (5);

\node[vertex,ball color=red,opaque] (4) at (1.5,0) {4};
\node[vertex,ball color=red,opaque] (6) at (6.0,3.0) {5};
\node[vertex,ball color=red,opaque] (6) at (-3.0,3.0) {6};
\end{tikzpicture}\quad}
        \end{subfigure}
\caption{Strengthened lifting procedure starting with the clique inequality of
$W_4 = \{ 2, 5, 6, 7, 8 \}$ to generate $x_{\{1,2,4,6,7,8\}} + 2x_{\{3\}} \leq 2$.}
\label{fig:stseqlifting}
\end{figure}

\subsection{On the Strength of Lemma~\ref{lem:strenglifting}}

\ef{The following results state that the strengthened procedure is, in some
sense, related to facet-subgraphs of $G$. The first result indicates
that, in general, the inequality produced by the strengthened procedure is
stronger than the one produced by the basic procedure.}

\begin{lemma}
\ef{
Let $f_0(x) \leq 1$ be the inequality produced by the strengthened procedure.
Then,
\[
1 + \sum_{\ell=1}^r \lambda^S_\ell = \alpha(G[H^S], c^S)
\leq \alpha(G[H^B], c^B) \leq 1 + \sum_{\ell=1}^r
\lambda^B_\ell,
\]
where $H^S \subseteq \bigcup_{t = 1}^{r+1} W_t$ and $H^B \supseteq H^S$ are
the support, and $c^S$ and $c^B$ are the coefficient vectors, of $f_0(x) \leq 1$
and~\eqref{eq:cliqueproj}, respectively.}

\begin{proof}
\ef{
Let us examine the first equality. The inequality $1 + \sum_{\ell=1}^r
\lambda^S_\ell \geq \alpha(G[H^S], c^S)$ holds since $f_0(x) \leq 1$ is valid by
Lemma~\ref{lem:strenglifting}. For the converse inequality, a stable set of
$H^S$ of weight $1 + \sum_{\ell=1}^r \lambda^S_\ell$ can be constructed by
including a subset of $W_{r+1}$ of weight $\lambda^S_r$ and a vertex of each $W_t$, for all $t \in \{ 0, \ldots, r-1 \}$, by the
definition of $F_{r-1}$ and $\lambda^S_r$.}

\ef{The comparison between the basic and strengthened procedures is given by the
next two inequalities. The former holds because $H^S \subseteq H^B$ and
$\lambda^B_t \geq \lambda^S_t$, for all $t \in \{ 1, \ldots, r \}$, whereas the
validity of~\eqref{eq:cliqueproj} (by Lemma~\ref{lem:cliqueproj}) implies the latter.}
\end{proof}
\end{lemma}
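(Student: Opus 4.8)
The plan is to establish the chain of (in)equalities from left to right, treating the central equality as the substantive part and the two outer inequalities as comparisons between the two procedures. First I would prove the leftmost equality $1 + \sum_{\ell=1}^r \lambda^S_\ell = \alpha(G[H^S], c^S)$. The inequality ``$\geq$'' is immediate: since $f_0(x) \leq 1$ is valid for $STAB(G)$ by Lemma~\ref{lem:strenglifting} (taken at $t=0$), its right-hand side value $1$ bounds $c^S{}^\top x$ over all stable sets supported on $H^S$, and writing out $f_0$ shows that $c^S{}^\top x \leq 1 + \sum_{\ell=1}^r \lambda^S_\ell$ for every stable set, hence $\alpha(G[H^S], c^S) \leq 1 + \sum_{\ell=1}^r \lambda^S_\ell$. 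For the reverse inequality, I would exhibit a single stable set $S$ of $G[H^S]$ attaining weight exactly $1 + \sum_{\ell=1}^r \lambda^S_\ell$, so that tightness is achieved. The natural candidate is a stable set that, for each $t$, hits $W_t$ in exactly one vertex (ensuring $x_{W_t} = 1$ so each term $\lambda^S_\ell(x_{W_\ell}-1)$ vanishes) while the part of $S$ inside $W_{r+1}$ realizes the clique weight contributing the base term; assembling such an $S$ compatibly across all the cliques $W_1, \ldots, W_{r+1}$ and verifying it is stable in $G$ and supported on $H^S$ is the content of the argument.

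Next I would handle the two comparison inequalities. The rightmost, $\alpha(G[H^B], c^B) \leq 1 + \sum_{\ell=1}^r \lambda^B_\ell$, is the same validity argument applied to the basic procedure: inequality~\eqref{eq:cliqueproj} is valid for $STAB(G)$ by Lemma~\ref{lem:cliqueproj}, and evaluating its left-hand side on any stable set of $G[H^B]$ yields the bound, exactly parallel to the ``$\geq$'' direction above. The middle inequality $\alpha(G[H^S], c^S) \leq \alpha(G[H^B], c^B)$ is where the two procedures must be related term by term. The key structural facts are $H^S \subseteq H^B$ (the strengthened procedure can only drop, never add, vertices from the support, since its $\lambda^S_\ell$ may cancel coefficients that the basic $\lambda^B_\ell$ leaves positive) and $\lambda^S_t \leq \lambda^B_t$ for every $t$. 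I would prove the latter by comparing the two maximization problems defining the lifting factors: both optimize over stable sets with $x_{W_t}=0$, but $\lambda^S_t$ maximizes over the restricted feasible set $\mathcal{S}_{t-1} = \mathcal S(G) \cap F_{t-1}$ (stable sets already meeting $W_1, \ldots, W_{t-1}$), whereas $\lambda^B_t$ maximizes over all of $P_t = \{x \in STAB(G_{t-1}) \mid x_{W_t}=0\}$, a larger feasible region, so its optimum is at least as large. Feeding these coefficientwise comparisons into the two objectives gives $c^S{}^\top x \leq c^B{}^\top x$ on the relevant stable sets, and hence the inequality between the two stability numbers.

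The main obstacle I anticipate is the reverse direction of the central equality, namely the explicit construction of the tight stable set $S$. One must verify simultaneously that $S$ is stable in $G$ (not merely in one of the projected graphs $G_t$), that it meets each $W_t$ in exactly one vertex so that all lifting terms cancel, and that its restriction to $W_{r+1}$ achieves the correct base weight; the compatibility of these requirements across overlapping cliques, together with the fact that the $F_{t-1}$-membership conditions are nested, is what makes this step delicate rather than routine. Once the tight solution is produced, the equality closes, and the remaining inequalities follow by the monotonicity and validity arguments sketched above.
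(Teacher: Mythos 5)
Your proposal follows essentially the same route as the paper's proof: the left equality via validity of $f_0(x)\leq 1$ plus an explicit tight stable set meeting each $W_t$ in one vertex and realizing the base weight inside $W_{r+1}$; the middle inequality via $H^S\subseteq H^B$ and $\lambda^S_t\leq\lambda^B_t$; and the right inequality via the validity of~\eqref{eq:cliqueproj} from Lemma~\ref{lem:cliqueproj}. If anything, you supply slightly more justification than the paper (the feasible-region containment $\mathcal S_{t-1}\cap\{x_{W_t}=0\}\subseteq P_t$ behind $\lambda^S_t\leq\lambda^B_t$, which the paper merely asserts), and you correctly flag the tight-set construction as the delicate step.
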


\ef{The second result establishes sufficient conditions for the generated
inequalities to be facet defining. These conditions are slightly weaker than
those in~\cite{XavierCampelo11} due to two differences. First, the subsets $W_3,
\ldots, W_r$ are not required to be cliques of $G$. Second, we use clique
projection and we assume condition~\ref{it:nocliquea} to impose appropriate
false edges in $G_{t'}$ instead of the auxiliary contracted graph defined
in~\cite{XavierCampelo11} to determine $W_{r+1}$. Since the proof is
very similar to the one in that paper, it is left to the appendix.}
\begin{theorem}
\ef{If $f_r(x) = x_{W_{r+1}}$ and
\begin{enumerate}
  \item $|W_t| = k$, the subgraph of $G_{t-1}$ induced by $\bigcup_{i=1}^t
  W_i$ is $k$-partite with vertex classes $V_t^1, \ldots, V_t^k$, and $W_{r+1}$ is a maximal clique of $G_r$ such that $W_{r+1} \cap V_r^k =
  \emptyset$,
  \item $T_t := (V_t, \mathcal W_t)$ is a {\em strong hypertree}
  defined by $V_t := \bigcup_{i=1}^k V_t^i$ and $\mathcal W_t := \{ W_1,
  \ldots, W_t \}$. More precisely, either $\mathcal W_t = \{ V_t \}$ or there is
  a $v \in V_t$ incident to a hyperedge $W_i \in \mathcal W_t$ sharing exactly
  $k - 1$ vertices with some other hyperedge of $T_t$ such that $(V_t \setminus \{v\}, \mathcal W_t \setminus \{W_i\})$ is also a strong hypertree,
  \item for every $i \in \{ 1, \ldots, k-1 \}$ and $w \in V_r^0$ such that
  $N_{G_r}(w) \cap V_r^i \ne \emptyset$, one of the following holds: $v \in W_t \cap V_t^i$ is a
  neighbor of $w$ in $G$ or there exists $t' \in \{ 1, \ldots, r \}$ such that $W_t$ is a clique of
  $G_{t'-1}$, $W_t$ and $W_{t'}$ are adjacent in $T_r$, $v \not\in W_{t'}$,
  and $v' \in W_{t'} \cap V_r^i$ is a neighbor of $w$ in $G_{t'-1}$,
  \label{it:nocliquea}
  \item no $v \in V_t^k$ has neighbors in $V_r^0$, {\em i.e.} $N_{G_r}(v)
  \cap V_r^0 = \emptyset$,
\end{enumerate}
hold for some $k > 0$ and for all $t \in \{ 1, \ldots, r \}$, then $f_t(x) \leq
1$ is facet defining for $F_t$, for all $t \in \{ 1, \ldots, r \}$.}
\label{thm:condii}
\end{theorem}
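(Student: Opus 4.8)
The plan is to prove the statement by downward induction on $t$, from the base case $t=r$ down to $t=1$, invoking at each step the facet-preserving part of the Lifting Lemma (Lemma~\ref{lem:xc}), but in its general form relative to the face $F_t$ rather than to $STAB(G)$ as in~\cite{XavierCampelo11}. Recall from the proof of Lemma~\ref{lem:strenglifting} that $f_t(x)\le 1$ is obtained from $f_{t+1}(x)\le 1$ by lifting across the equation $x_{W_{t+1}}=1$, with lifting factor $\lambda^S_{t+1}$ attained at equality by construction, and that $F_{t+1}=\{x\in F_t\mid x_{W_{t+1}}=1\}$. Hence, if I can guarantee that (a) $x_{W_{t+1}}\le 1$ cuts a proper facet of $F_t$ (equivalently, $\dimension F_{t+1}=\dimension F_t-1$) and (b) $f_{t+1}(x)\le 1$ is facet-defining for $F_{t+1}$, then the lifting lemma immediately yields that $f_t(x)\le 1$ is facet-defining for $F_t$. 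Item~(b) is the inductive hypothesis, and the tightness of $\lambda^S_{t+1}$ supplies the extra apex point lying off the hyperplane $x_{W_{t+1}}=1$, exactly as the point $x^n$ in the proof of Lemma~\ref{lem:xc}.

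For the base case $t=r$ the inequality is the clique inequality $x_{W_{r+1}}\le 1$, and I would show directly that it is facet-defining for $F_r$. Here $W_{r+1}$ is a maximal clique of $G_r$ by Condition~1 and $F_r\subseteq STAB(G_r)$ by Lemma~\ref{lem:FtSTABGt}; the task is to exhibit $\dimension F_r$ affinely independent stable sets of $G$ that meet every $W_1,\dots,W_r$ together with a vertex of $W_{r+1}$. This is where the $k$-partite and \emph{strong hypertree} structure of Conditions~1 and~2 is used: the recursive peeling of a private vertex $v$ together with its hyperedge $W_i$ (which overlaps a neighbouring hyperedge in exactly $k-1$ vertices) gives a systematic way to assemble such transversal stable sets one vertex class at a time.

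The inductive step then reduces to verifying item~(a): that the $k$-clique $W_{t+1}$ remains maximal \emph{relative to} $F_t$ and that enough affinely independent members of $\mathcal S_t$ with $x_{W_{t+1}}=1$ exist. By Corollary~\ref{cor:WtvalidFt1} the inequality $x_{W_{t+1}}\le 1$ is valid for $F_{t-1}$, and the strong-hypertree hypothesis again furnishes the affinely independent transversals. The role of Condition~\ref{it:nocliquea} and of the fourth condition is to control the external vertices $V_r^0$: the fourth condition keeps the last class $V_r^k$ (the one avoided by $W_{r+1}$) free of neighbours in $V_r^0$, so that the apex point produced by the tight $\lambda^S_{t+1}$ can always be extended by a vertex of $V_r^k$ while remaining stable, and Condition~\ref{it:nocliquea} guarantees that every adjacency between $V_r^0$ and an inner class $V_r^i$ with $i<k$ is either realised in $G$ or faithfully simulated by a false edge created during the projections. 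This is precisely what lets us replace the auxiliary contracted graph of~\cite{XavierCampelo11} by clique projection and still conclude that the vertex sets constructed are stable in $G$ and attain the required weight under $c^S$.

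The main obstacle is item~(a) in full generality, namely certifying that the lifted clique $W_{t+1}$ cuts a facet of $F_t$ and producing the requisite affinely independent stable sets. The delicate point is the bookkeeping of the interaction between the projection-induced false edges and the external vertices $V_r^0$: one must check that no false edge spuriously forbids a transversal we wish to construct and that Condition~\ref{it:nocliquea} supplies, for each relevant neighbour $w\in V_r^0$, a dominating vertex $v$ (either in $G$ or through an adjacent hyperedge $W_{t'}$), so that exchanging vertices within a class never destroys stability. Since, apart from these projection-specific adjustments, the affine-independence construction is identical to that of~\cite{XavierCampelo11}, I would carry over their counting essentially verbatim and relegate the full details to the appendix, as announced.
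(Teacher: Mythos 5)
Your proposal follows essentially the same route as the paper's appendix proof: downward induction on $t$, with the facet-preserving half of Lemma~\ref{lem:xc} (applied relative to the face $F_t$, with the tight $\lambda^S_{t+1}$ supplying the off-hyperplane point) driving the inductive step, the $k$-partite/strong-hypertree structure yielding the dimension count that makes $x_{W_{t+1}}\leq 1$ facet-defining for $F_t$, and conditions~3 and~4 controlling the adjacencies between $V_r^0$ and the classes exactly as you describe. The only real difference is one of packaging: the paper organizes the base case $t=r$ as an affine isomorphism $F_r \cong STAB(G_r[V_r^0\cup R])$ (its Lemma~\ref{lem:cong}, built on the coordinate-equality Lemma~\ref{lem:xuv} and the false-edge Lemma~\ref{lem:equalVi}) so as to invoke the classical fact that maximal clique inequalities are facet-defining, whereas you propose the equivalent direct affine-independence count.
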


It can be noticed that the graph and the cliques $W_1, W_2, W_3, W_4$ in
Figure~\ref{fig:projection2} satisfy the sufficient conditions with
$r=3$ and $k=3$. The above theorem implies that the inequality generated by the
strengthened procedure (as shown in Figure~\ref{fig:stseqlifting}) is indeed
facet defining for the graph of the example.
\renewcommand{\algorithmicrequire}{\textbf{Input:}}
\renewcommand{\algorithmicensure}{\textbf{Output:}}
\MakeRobust{\Call}

\section{The Separation Procedure}
\label{sec:separation}

We present in this section a separation procedure based on the
valid inequality generation procedures presented in the last section.
Algorithm~\ref{alg:procedure} summarizes the proposed separation procedure.
Besides the graph $G$, the input of this algorithm is a fractional solution $\bar x$ to be
separated and a set $\mathcal W$ of maximal cliques of $G$. The variable $F$, initially empty,
stores the set of violated inequalities that are generated by the separation
procedure and returned at the end of its execution. For each clique $W_1$ in $\mathcal W$, we
proceed by generating a sequence $S = \langle W_1, \ldots, W_{r+1} \rangle$ of distinct maximal cliques, with the corresponding sequence $\langle
G_0, \ldots, G_r \rangle$ of projected graphs, and a set $T$ of indices $t$ such that
the clique inequality associated with $W_t$ is violated for $STAB(G_{t-1})$. The
generation of the sequence $S$ continues until a certain number of projections
is performed and a violated clique inequality of the current projected graph is found. At this point, all
subsequences $\langle W_1, \ldots, W_t \rangle$ of $S$ defined by a violated
clique are lifted in reverse order as follows: for each $t \in T$,
we apply Lemma~\ref{lem:cliqueproj} or Lemma~\ref{lem:strenglifting}
iteratively to $W_t$ in order to generate a valid inequality for the original
graph. The computation of the lifting factors is accomplished with the
algorithm in~\cite{Ostergard.01}. The set of violated valid inequalities so
generated (stored in $F$) is then returned.

\begin{algorithm}
\caption{Separation procedure: \Call{sepForStab}{$G$, $\bar x$, $\mathcal
W$}}
\label{alg:procedure}
{\footnotesize
\begin{algorithmic}[1]
\Require{Graph $G$, fractional solution $\bar x$ and set $\mathcal W$ of
maximal cliques of $G$}
\Ensure{A set of clique, rank, or weighted rank cuts}
\State $k \gets 0$
\State $F \gets \emptyset$
\State $G_0 \gets G$
\Repeat
	\State $k \gets k + 1$
	\State Select, and remove, a starting clique $W_1$ in $\mathcal W$
	\State $t \gets 0$
	\State $T \gets \emptyset$
	\While {($\bar x_{W_{t+1}} \leq 1 + MINVIOLATION$ or $t \leq MINDEPTH$) and
	$t \leq MAXDEPTH$}
    	\State Project the clique $W_{t+1}$, getting the graph $G_{t+1}$
		\While {$|W_{t+1}| > 2$ and $E_{t+1} = E_t$}
			\State Remove a vertex from $W_{t+1}$
	    	\State Project the clique $W_{t+1}$, getting the graph $G_{t+1}$
	    \EndWhile
    	\If {$E_{t+1} \ne E_t$}
			\If {$\bar x_{W_{t+1}} > 1 + MINVIOLATION$}
				\State $T \gets T \cup \{ t \}$
			\EndIf
	    	\State $t \gets t+1$
		\EndIf
    	\State Find a maximal clique $W_{t+1}$ of $G_t$ \label{lin:maxcliquet}
	\EndWhile
	\ForAll{$t \in T$}
		\State $f_t(x) \gets x_{W_{t+1}}$
		\While {$t > 0$}
			\State $t \gets t - 1$
			\State Compute $\lambda_{t+1}$ on $G_t$ \label{lin:lifting}
    		\State $f_t(x) \gets f_{t+1}(x) + \lambda_{t+1} (x_{W_{t+1}} - 1)$
		\EndWhile
		\If {$f_0(\bar x) > 1 + MINVIOLATION$}
			\State $F \gets F \cup \{ f_0 \}$
		\EndIf
	\EndFor
\Until $\mathcal W = \emptyset$ or $|F| \geq MAXNCUTS$ or $k = MAXITER$
\label{lin:until}
\State \Return $F$
\end{algorithmic}}
\end{algorithm}

The execution of the separation procedure is governed by five parameters. Two
parameters control the number of iterations $k$ of the main loop. According to
the condition checked at line~\ref{lin:until}, the loop is iterated at most
$MAXITER$ times, and this as long as there are cliques left in $\mathcal W$ and
the number of violated inequalities encountered is at most $MAXNCUTS$. At each
iteration, the size $r+1$ of the sequence of projections performed is at least
$MINDEPTH$. The greater is the sequence size $r+1$, the larger is either the
number of variables or the coefficients involved in the valid inequality
generated after the lifting process. An inequality is considered violated only
if its violation is greater than the threshold $MINVIOLATION$. An iteration fails
if no violated clique inequality is found after $MINDEPTH$ clique projections.
The number of projections are bounded from above by parameter $MAXDEPTH$ because
of possible failed iterations, which seldom occurs in practice.

The aim of the set $\mathcal W$ of maximal cliques given as input to the
separation procedure is to yield different sequences of clique projections.
The cliques in $\mathcal W$ are generated with two versions of a greedy
algorithm. In both versions, the generation of a clique consists in selecting an
initial vertex $v$ and then determining a maximal clique in the subgraph induced
by $N(v)$. This clique in $N(v)$ is greedily built by considering the vertices
sorted in a certain order. In one version, vertices are sorted in a nonincreasing order of
weight, where the weight of a vertex $v$ is the value $\bar x_{\{v\}}$. A clique
built with this version tends to have a relatively large intersection with
previous cliques. In the other version, vertices not covered by previous cliques
in $\mathcal W$ have priority with the purpose of generating cliques with small
intersections. In order to avoid repetitions of cliques, the initial vertex of
both versions is one not covered by previous cliques. In order to
maintain a good probability of generating valid inequalities violated by
$\bar x$, only cliques $W$ with $\bar x_W \geq 0.65$ are kept in $\mathcal W$.

Some remarks with respect to the generation of maximal cliques at
line~\ref{lin:maxcliquet} are the following. The heuristic used to generate
the maximal clique $W_{t+1}$ is similar to the one used to generate the cliques
in $\mathcal W$. There are two differences, though: $W_{t+1}$ is guaranteed to
include both a vertex that does not appear in $W_0, \ldots, W_t$ and a false
edge in $E_t \setminus E_{t-1}$, when $t > 0$. For every $K$ clique projections,
we employ the algorithm in~\cite{Tomita.Tanaka.Takahashi.06} to search for a
violated clique. We do not generate all cliques, but stop when a prespecified
number of cliques is enumerated instead. It might be the case that $G_{t+1}$
contains no false edges, relative to $G_t$, which means that no false edges are
generated by the clique projection of $W_{t+1}$. In such a situation, vertices
are iteratively removed from $W_{t+1}$ until either a false edge is generated in
the projection of $W_{t+1}$ or $W_{t+1}$ has only one vertex. In the latter
case, $W_{t+1}$ is discarded.

\section{Computational Experiments and Analysis of Results}
\label{sec:experiments}

In this section we provide some results of computational experiments conducted
in order to explore whether the proposed method is useful as a cut-generating tool
for the MSS problem. Our main goal is not to provide a
competitive algorithm for the MSS problem, since combinatorial algorithms are
much more effective than cutting-plane algorithms for this
problem~\cite{CorreaMCMD14,Segu14}. As already pointed out
in~\cite{Rossi.Smriglio.01}, the facts that other combinatorial
problems may be formulated including stable set constraints, either explicitly
or devised to address their vertex packing relaxation, are motivations to the
investigation of efficient polyhedral methods for the stable set problem. In
this context, we intend to assess whether the proposed procedure is effective at
generating rank or weighted rank cuts for the STAB polytope, and the nature of
the obtained cuts. To this end, we performed three cutting-plane method
implementations attached to the \textsc{COIN-Clp} linear programming solver to
compute a strengthened upper bound for the MSS problem~\cite{coinclp}. In these
implementations, a clique cover of all edges in $E$ is first determined
and the corresponding inequalities constitute the initial model. The
method consists in iteratively solving the current model. Whenever a
fractional solution is found, we first select the set $\mathcal W$ of maximal
cliques of $G$. The violated clique inequalities encountered in this process are
added to the model. Then, the separation procedure of Algorithm~\ref{alg:procedure} is
executed. In addition to the separation procedure, we also implemented the
rounding heuristic proposed in~\cite{Pardalos} and employed it to compute lower bounds.


\begin{table}[htbp]
\centering
{\scriptsize
\begin{tabular}{lcrrrrrrrrr} \hline
\multicolumn{3}{l}{Instance} & & \multicolumn{4}{l}{Upper bound} & \quad &
\multicolumn{2}{l}{Time (sec.)} \\
\cline{1-3} \cline{5-8} \cline{10-11}
Graph & $n$/Dens. & $\alpha$ & LB & $Z_0$ & $SFS\_C$ & $SFS\_B$ & $SFS\_S$ &
\quad & $SFS\_B$ & $SFS\_S$ \\
\hline
\hline
{\tt brock200\_1} & 200/25&21 & 20 & 45.48 & 38.57 & 35.18 & {\bf 34.85} &&
41.76 & 44.86 \\
\rowcolor[gray]{.9} {\tt brock200\_2} & 200/50&12 & 12 & 28.69 & 22.05 & 17.24 & {\bf 16.29} && 120.29 & 120.23 \\
{\tt brock200\_3} & 200/40&15 & 14 & 35.66 & 28.21 & 24.22 & {\bf 23.26} &&
120.22 & 120.47 \\
\rowcolor[gray]{.9} {\tt brock200\_4} & 200/34&17 & 16 & 37.82 & 31.17 & 27.77 & {\bf 26.81} && 109.81 & 117.72 \\
{\tt brock400\_2} & 400/25&12 & 24 & 78.95 & 65.68 & 63.85 & {\bf 63.27} && 120.67 & 120.95 \\
\rowcolor[gray]{.9} {\tt brock400\_4} & 400/25&17 & 23 & 79.53 & 65.99 & 63.61 & {\bf 63.17} && 120.78 & 120.80 \\
{\tt c-fat200-1} & 200/92&12 & 12 & 12 & 12 & {\bf 12} & {\bf 12} && 0.04 & 0.04 \\
\rowcolor[gray]{.9} {\tt c-fat200-2} & 200/84&24 & 24 & 24 & 24 & {\bf 24} & {\bf 24} && 0.02 & 0.02 \\
{\tt c-fat200-5} & 200/57&58 & 58 & 66.66 & 66.66 & {\bf 58} & {\bf 58} && 22.67 & 27.10 \\
\rowcolor[gray]{.9} {\tt c-fat500-1} & 500/96&14 & 14 & 14 & 14 & {\bf 14} & {\bf 14} && 1.68 & 1.65 \\
{\tt c-fat500-10} & 500/81&126 & 126 & 126 & 126 & {\bf 126} & {\bf 126} && 0.97 & 0.97 \\
\rowcolor[gray]{.9} {\tt c-fat500-2} & 500/93&26 & 26 & 26 & 26 & {\bf 26} & {\bf 26} && 0.69 & 0.69 \\
{\tt c-fat500-5} & 500/96&64 & 64 & 64 & 64 & {\bf 64} & {\bf 64} && 0.81 & 0.81 \\
\rowcolor[gray]{.9} {\tt C125.9} & 125/10&34 & 34 & 44.37 & 43.21 & 38.79 & 38.84 && 1.05 & 1.20 \\
{\tt C250.9} & 250/10&44 & 43 & 77.25 & 71.78 & 65.71 & 65.35 && 11.88 & 18.04 \\
\rowcolor[gray]{.9} {\tt DSJC125.1} & 125/90&34 & 34 & 45.28 & 43.22 & {\bf
39.23} & 39.41 && 1.21 & 0.94 \\
{\tt DSJC125.5} & 125/50&10 & 10 & 20.80 & 15.98 & 12.34 & {\bf 11.97} && 32.44
& 20.25 \\
\rowcolor[gray]{.9} {\tt DSJC500.5} & 500/50&13 & 13 & 58.94 & 46.16 & 39.44 &
{\bf 35.54} && 123.20 & 123.29 \\
{\tt gen400\_p0.9\_55} & 400/90&55 & 55 & 93.07 & 55 & {\bf 55} & {\bf 55} && 1.09 & 1.22 \\
\rowcolor[gray]{.9} {\tt gen400\_p0.9\_65} & 400/90&65 & 65 & 103.59 & 65 & {\bf 65} & {\bf 65} && 1.32 & 1.36 \\
{\tt gen400\_p0.9\_75} & 400/90&75 & 75 & 106.54 & 75 & {\bf 75} & {\bf 75} && 2.59 & 2.84 \\
\rowcolor[gray]{.9} {\tt hamming6-4} & 64/65&4 & 4 & 7.57 & 5.33 & 4.28 & {\bf
4} && 0.24 & 0.23 \\
{\tt hamming8-4} & 256/36&16 & 16 & 16 & 16 & {\bf 16} & {\bf 16} && 0.18 & 0.16
\\
\rowcolor[gray]{.9} {\tt keller4} & 171/35&11 & 11 & 25.48 & 14.82 & {\bf 14.24} & {\bf 14.20} && 12.71 & 16.78 \\
{\tt MANN\_a27} & 378/1&126 & 125 & 135 & 135 & 135 & 135 && 0.11 & 0.11 \\
\rowcolor[gray]{.9} {\tt MANN\_a45} & 1035/0.4&345 & 341 & 363 & 363 & {\bf 360}
& {\bf 360} && 1.39 & 1.43 \\
{\tt MANN\_a9} & 45/7&16 & 16 & 18 & 18 & {\bf 18} & {\bf 18} && 0.00 & 0.00 \\
\rowcolor[gray]{.9} {\tt p\_hat300-1} & 300/75&8 & 8 & 24.12 & 16.06 & 12.94 &
{\bf 12.43} && 121.91 & 120.62 \\
{\tt p\_hat300-2} & 300/51&25 & 25 & 45.34 & 34.06 & 33.64 & {\bf 33.50} && 37.18 & 30.29 \\
\rowcolor[gray]{.9} {\tt p\_hat300-3} & 300/26&36 & 35 & 66.67 & 55.67 & 51.95 & {\bf 51.82} && 67.61 & 57.77 \\
{\tt san200\_0.7\_2} & 200/30&18 & 18 & 27.24 & 19.07 & 18.73 & {\bf 18.54} && 20.37 & 6.83 \\
\rowcolor[gray]{.9} {\tt san200\_0.9\_1} & 200/10&70 & 70 & 70.29 & 70 & {\bf 70} & {\bf 70} && 0.03 & 0.04 \\
{\tt san200\_0.9\_2} & 200/10&60 & 60 & 65.81 & 60 & {\bf 60} & {\bf 60} && 0.26 & 0.24 \\
\rowcolor[gray]{.9} {\tt san200\_0.9\_3} & 200/10&44 & 44 & 59.75 & 44 & {\bf 44} & {\bf 44} && 0.17 & 0.23 \\
{\tt san400\_0.5\_1} & 400/50&13 & 9 & 16.41 & 13.60 & 15.20 & 15.18 && 142.37 & 124.81 \\
\rowcolor[gray]{.9} {\tt san400\_0.9\_1} & 400/10&100 & 100 & 119.24 & 100.11 & {\bf 100} & {\bf 100} && 2.57 & 2.87 \\
{\tt sanr200\_0.7} & 200/30&18 & 18 & 41.30 & 33.83 & 30.92 & {\bf 30.21} &&
73.48 & 77.88 \\
\rowcolor[gray]{.9} {\tt sanr200\_0.9} & 200/10&42 & 42 & 64.19 & 59.98 & 54.30
& {\bf 54.18} && 8.98 & 10.77 \\

\hline \hline
{\tt $G(100,10)$} & 100/10&31.2 & 31 & 37.56 & 36.25 & {\bf 33.07} & {\bf 33.06} && 0.35 & 0.38 \\
\rowcolor[gray]{.9} {\tt $G(100,20)$} & 100/20&20.2 & 20.2 & 29.28 & 26.53 & 23.87 & 23.86 && 1.65 & 1.67 \\
{\tt $G(100,30)$} & 100/30&15 & 15 & 24.91 & 21.04 & 18.45 & {\bf 18.23} && 3.69 & 4.58 \\
\rowcolor[gray]{.9} {\tt $G(100,50)$} & 100/50& 9.2 & 9.2 & 17.19 & 13.45 &
10.83 & {\bf 10.60} && 7.82 & 6.52 \\
{\tt $G(150,10)$} & 150/10&37.4 & 37 & 51.51 & 48.56 & 44.07 & 43.99 && 2.17 & 2.60 \\
\rowcolor[gray]{.9} {\tt $G(150,20)$} & 150/20&22.4 & 22.4 & 39.84 & 35.37 & 32.20 & 32.06 && 8.10 & 10.35 \\
{\tt $G(150,30)$} & 150/30&16.6 & 16.6 & 32.88 & 27.80 & 24.57 & {\bf 24.19} && 23.49 & 25.22 \\
\rowcolor[gray]{.9} {\tt $G(150,50)$} & 150/50& 10.2 & 10.2 & 23.14 & 18.06 &
14.15 & {\bf 13.53} && 59.00 & 50.63 \\
{\tt $G(200,10)$} & 200/10&41.6 & 41 & 63.99 & 60.86 & 55.11 & 55.04 && 7.12 & 8.07 \\
\rowcolor[gray]{.9} {\tt $G(200,20)$} & 200/20&26 & 25.4 & 50.73 & 44.56 & 40.87 & 40.65 && 23.76 & 30.26 \\
{\tt $G(200,30)$} & 200/30&18 & 17.8 & 40.75 & 34.35 & 31.21 & {\bf 30.49} && 60.53 & 84.12 \\
\rowcolor[gray]{.9} {\tt $G(200,50)$} & 200/50& 11 & 11 & 29.13 & 22.32 & 17.62
& {\bf 16.65} && 115.00 & 112.76 \\
{\tt $G(300,10)$} & 300/10&-- & 44.8 & 90.70 & 82.81 & {\bf 75.81} & 75.91 && 33.74 & 35.91 \\
\rowcolor[gray]{.9} {\tt $G(300,20)$} & 300/20&28.4 & 27.2 & 69.84 & 59.97 & 56.55 & {\bf 56.38} && 74.90 & 84.56 \\
{\tt $G(300,30)$} & 300/30&20.2 & 19.6 & 56.79 & 47.20 & 43.93 & 43.31 && 120.60 & 120.66 \\
\rowcolor[gray]{.9} {\tt $G(300,50)$} & 300/50& 12 & 12 & 40.07 & 30.68 & 24.47
& {\bf 22.90} && 121.00 & 121.29 \\

 \hline \hline
\end{tabular}}
\medskip
\caption{Comparison of upper bounds and processing times among $SFS\_C$,
$SFS\_B$, and $SFS\_S$.}
\label{tab:resultsGeneralStrengthenedTiempo}
\end{table}


The first implementation, called $SFS\_C$, only employs the clique cuts
found in the generation of $\mathcal W$ and aims to serve as a reference
for evaluation of the two other implementations. These are implementations that
include a call to \Call{sepForStab}{} after the generation of $\mathcal W$.
The difference between these two implementations is restricted to how the
lifting operation at line~\ref{lin:lifting} is performed. Version $SFS\_B$ uses
the procedure established in Lemma~\ref{lem:cliqueproj} while
Lemma~\ref{lem:strenglifting} is employed in version $SFS\_S$. Various
configurations of the parameters of \Call{sepForStab}{} were tested and
we chose to report the results corresponding to the following setup:
$MINVIOLATION=0.03$, $MINDEPTH=10$, $K=10$, $MAXITER=50$, and $MAXNCUTS=20$.
These were the values that produced the best average results. All the
implementations were written in C++, compiled with {\tt g++ -std=c++11 -m64
-O -fPIC}, and run on a Intel(R) Core(TM) i7-4790K CPU clocked at 4.00GHz.

Table~\ref{tab:resultsGeneralStrengthenedTiempo} summarizes the results of experiments
with some instances from the DIMACS benchmark and for random graphs with
100--300 vertices. The notation $G(n,d)$ specifies random graphs with $n$ vertices and a density of
$d\in[0,1]$, and for these instances we report the average results over five
randomly-generated instances. The experiments were performed on a 64-bit
personal computer, with a time limit of two minutes. The first four columns
contain the instance name, the number of vertices, the graph density, and its
stability number. The following columns contain data for the cutting-plane
method: the column ``LB'' contains the lower bound found by the rounding
heuristic, the columns ``Upper bound'' contain the upper bound obtained with
the three implementations (in addition, the upper bound $Z_0$ corresponding to
the linear relaxation of the initial model is also indicated), and the columns
``Time'' report the total time spent, in seconds.

We can observe the following facts from the data in
Table~\ref{tab:resultsGeneralStrengthenedTiempo}. The graphs on which the
reduction in the upper bound obtained with \Call{sepForStab}{} is significant
when compared to $SFS\_C$ are {\tt brock200\_2}, {\tt brock200\_4}, {\tt
C125.9}, {\tt C250.9} and {\tt DSJC500.5}. For graphs {\tt c-fat200-5} and {\tt
MANN\_a45}, clique cuts are not capable of improving the bound
obtained with respect to $Z_0$. However, the rank and weighted rank cuts added
with \Call{sepForStab}{} made versions $SFS\_B$ and $SFS\_S$ capable of
improving the upper bound, attaining the optimal value in the first case. The
only case where the versions $ SFS\_B$ e $SFS\_S$ do not get better upper bounds
than $SFS\_C$ is the graph {\tt san400\_0.5\_1}. The reason for this
phenomenon is that cliques become large in projected graphs at depth 6 and
beyond, making the calculation of lifting factors very time consuming.
Thus, the time limit is reached before violated inequalities are found. In
Table~\ref{tab:resultsStrengthenedsan400.0.5}, the results are presented with
a depth limit of 3 projections, where we can observe the improvement of the
 bounds with respect to $SFS\_C$. In general, there is a tendency of the version
 $SFS\_S$ to produce upper bounds slightly better than the version $SFS\_B$. In
 particular, the graph {\tt DSJC500.5} is the case where the difference is more pronounced. Unlike the graphs with particular structures,
 random graphs present a homogenous behavior, with both versions $SFS\_B$
 and $SFS\_S$ having better performance than $SFS\_C$, to the advantage of
 version $SFS\_S$. With regard to the comparison of the processing time
 between the versions with projection of cliques, we observed a trend to an
 increase in version $SFS\_S$ with respect to version $SFS\_B$. There are,
 however, 3 exceptions: {\tt p\_hat300-2}, {\tt p\_hat300-3} and {\tt
 san200\_0.7\_2}. In such cases, there is a significant reduction in processing
 time, with slight improvement in the upper bound obtained. This confirms the
 special case of {\tt gen400\_p0.9\_55} and {\tt gen400\_p0.9\_55} where the integer programming approach has performance far superior to combinatorial algorithms. The only combinatorial algorithm
  that solves these graphs is in~\cite{CorreaMCMD14}.

\begin{table}[htbp]
\centering
{\scriptsize
\begin{tabular}{cccccrrcrrcrr} \hline
& \multicolumn{3}{l}{Upper bound} &
\quad & \multicolumn{2}{l}{Time} & \quad & \multicolumn{2}{l}{$SFS\_B$ N. of cuts} &
\quad & \multicolumn{2}{l}{$SFS\_S$ N. of cuts} \\
\cline{2-4} \cline{6-7} \cline{9-10} \cline{12-13}
 LB & $SFS\_B$ & $SFS\_S$ & \cite{Pardalos} && $SFS\_B$ & $SFS\_S$ && Rank
& W-Rank && Rank & W-Rank \\ \hline \hline
  9 & 13.61 & 13.59 & {\bf 13.24} && 113.60 & 120.46 &
& 1437 & 1072 && 1331 & 1121 \\ 
 \hline \hline
\end{tabular}}
\medskip
\caption{Behavior of $SFS\_B$ and $SFS\_S$ with graph {\tt san400\_0.5\_1}
when the depth limit is 3.}
\label{tab:resultsStrengthenedsan400.0.5}
\end{table}

As shown in Table~\ref{tab:resultsGeneralStrengthenedCortesDIMACS}, the procedure is
able to generate a large number of cuts, and provides upper bounds that are
competitive with those generated in \cite{Pardalos}, \cite{Rossi.Smriglio.01},
and~\cite{Giandomenico.Rossi.Smriglio.13} for a representative sample of
benchmark graphs. It is worth remarking that the upper bound obtained with our
approaches is tighter, in comparison the ones from the literature, when the
density is between 40\% and 75\%. The columns ``Upper bound'' contain the upper
bound attained in \cite{Rossi.Smriglio.01}, \cite{Pardalos}, and~\cite{Giandomenico.Rossi.Smriglio.13} in the root node of their branch-and-cut algorithms, respectively. Finally, the last three columns contain the number of generated clique cuts, violated rank inequalities, and violated weighted rank inequalities, respectively. Similarly to existing procedures, our
cut-generating algorithm finds a large number of violated clique inequalities,
and is also able to find many violated rank inequalities. The number of weighted
rank inequalities generated by the procedure is smaller, but nevertheless
provides an interesting set of additional and non-trivial valid inequalities. In
Table~\ref{tab:resultsGeneralStrengthenedCortesRand}, similar results can be
observed for random generated graphs.

Some characteristics of the results presented in
Table~\ref{tab:resultsGeneralStrengthenedCortesDIMACS} have been also observed
in~\cite{Pardalos} in the context of a comparison between the approach adopted
in that paper of combinining several separation heuristics and the one of edge
projection of~\cite{Rossi.Smriglio.01}. It was observed that in some peculiar
cases (notably, the sparse graphs {\tt C125.9} and {\tt C250.9}), the edge
projection alone performed better than the combination of cuts with respect to
the upper bound obtained. The intriguing question that deserves further clarification is how a strategy which involves a number of approximations
(for instance, in the removal of edges and in determining the righthand side of
the cut resulting of the lifting operation) results in stronger cuts. An analysis of the description given in~\cite{Rossi.Smriglio.01} indicates an inaccuracy in
the proposed procedure that may generate non-valid inequalities, although the
authors asserted in a personal communication that the cuts generated in the
reported experiments are verified to be valid. This fact leaves the possibility of generating cuts that, though
valid, are not generated by procedures that ensure the viability of {\em all}
generated cuts. However, the cuts used in~\cite{Rossi.Smriglio.01} can be considered as a
reference of strong cuts for some sparse graphs.

\begin{landscape}

\begin{table}[htbp]
\centering
{\scriptsize
\begin{tabular}{lcrrrrrrrcrrrcrrr} \hline
\multicolumn{3}{l}{Instance} & \quad & \multicolumn{5}{l}{Upper bound} & \quad &
\multicolumn{3}{l}{$SFS\_B$: Number of cuts} & \quad &
\multicolumn{3}{l}{$SFS\_S$: Number of cuts} \\
\cline{1-3} \cline{5-9} \cline{11-13} \cline{15-17}
Graph & $n$/Dens. & $\alpha$ && $SFS\_B$ & $SFS\_S$ & \cite{Pardalos} &
\cite{Rossi.Smriglio.01} &
\cite{Giandomenico.Rossi.Smriglio.13} && Clique & Rank & W-Rank && Clique & Rank
& W-Rank \\ \hline \hline
{\tt brock200\_1} & 200/25&21 && 35.18 & 34.85 & -- & -- & {\bf 33.59} && 849 &
728 & 2116 && 847 & 752 & 2083 \\
\rowcolor[gray]{.9} {\tt brock200\_2} & 200/50&12 && 17.24 & {\bf 16.29} & 20.99 & 22.01 & 18.27 && 2957 & 591 & 6261 && 2893 & 628 & 6512 \\
{\tt brock200\_3} & 200/40&15 && 24.22 & {\bf 23.26} & -- & -- & 23.55 && 1873 &
468 & 6065 && 2086 & 563 & 6115 \\
\rowcolor[gray]{.9} {\tt brock200\_4} & 200/34&17 && 27.77 & {\bf 26.81} & 29.93
& 30.87 & {\bf 26.77} && 1491 & 531 & 5531 && 1568 & 622 & 5354 \\
{\tt brock400\_2} & 400/25&12 && 63.85 & {\bf 63.27} & 63.84 & 67.66 & -- && 2696 & 843 & 1394 && 2657 & 862 & 1686 \\
\rowcolor[gray]{.9} {\tt brock400\_4} & 400/25&17 && 63.61 & {\bf 63.17} & 63.89 & 67.98 & -- && 2685 & 778 & 1634 && 2706 & 937 & 1998 \\
{\tt c-fat200-1} & 200/92&12 && {\bf 12} & {\bf 12} & 12.71 & 12.86 & -- && -- & -- & -- && -- & -- & -- \\
\rowcolor[gray]{.9} {\tt c-fat200-2} & 200/84&24 && {\bf 24} & {\bf 24} & {\bf 24} & {\bf 24} & -- && -- & -- & -- && -- & -- & -- \\
{\tt c-fat200-5} & 200/57&58 && {\bf 58} & {\bf 58} & 58.89 & 65.25 & {\bf 58}
&& 97 & 843 & 84 && 149 & 866 & 79 \\
\rowcolor[gray]{.9} {\tt c-fat500-1} & 500/96&14 && {\bf 14} & {\bf 14} & {\bf 14} & 14.98 & -- && 107 & 13 & 9 && 72 & 15 & 11 \\
{\tt c-fat500-10} & 500/81&126 && {\bf 126} & {\bf 126} & {\bf 126} & 223.29 & -- && -- & -- & -- && -- & -- & -- \\
\rowcolor[gray]{.9} {\tt c-fat500-2} & 500/93&26 && {\bf 26} & {\bf 26} & 26.97 & 57.78 & -- && -- & -- & -- && -- & -- & -- \\
{\tt c-fat500-5} & 500/96&64 && {\bf 64} & {\bf 64} & 64.70 & 67.08 & -- && -- & -- & -- && -- & -- & -- \\
\rowcolor[gray]{.9} {\tt C125.9} & 125/10&34 && 38.79 & 38.84 & 41.26 & {\bf 37.40} & 37.81 && 61 & 397 & 148 && 67 & 430 & 132 \\
{\tt C250.9} & 250/10&44 && 65.71 & 65.35 & 69.76 & {\bf 58.30} & 63.95 && 389 & 1111 & 399 && 380 & 1093 & 402 \\
\rowcolor[gray]{.9} {\tt DSJC125.1} & 125/90&34 && 39.23 & 39.41 & -- & --
& {\bf 38.22} && 55 & 387 & 149 && 51 & 365 & 147 \\
{\tt DSJC125.5} & 125/50&10 && 12.34 & {\bf 11.97} & -- & -- & 13.21 && 918 &
255 & 4182 && 968 & 261 & 3270 \\
\rowcolor[gray]{.9} {\tt DSJC500.5} & 500/50&13 && 39.44 & {\bf 35.54} & -- &
52.95 & -- && 5556 & 257 & 1066 && 5538 & 299 & 1291 \\
{\tt gen400\_p0.9\_55} & 400/90&55 && {\bf 55} & {\bf 55} & -- & 56.20 & -- && 513 & 254 & 33 && 523 & 257 & 32 \\
\rowcolor[gray]{.9} {\tt gen400\_p0.9\_65} & 400/90&65 && {\bf 65} & {\bf 65} & -- & 65.25 & -- && 598 & 246 & 23 && 620 & 269 & 17 \\
{\tt gen400\_p0.9\_75} & 400/90&75 && {\bf 75} & {\bf 75} & -- & {\bf 75} & -- && 866 & 408 & 55 && 901 & 475 & 55 \\
\rowcolor[gray]{.9} {\tt hamming6-4} & 64/65&4 && 4.28 & {\bf 4} & -- & -- & 4.64 && 291 & 120 & 383 && 310 & 164 & 417 \\
{\tt hamming8-4} & 256/36&16 && {\bf 16} & {\bf 16} & {\bf 16} & {\bf 16} & -- && 173 & -- & -- && 173 & -- & -- \\
\rowcolor[gray]{.9} {\tt keller4} & 171/35&11 && {\bf 14.24} & {\bf 14.20} & 14.83 & 14.95 & 14.29 && 560 & 464 & 1076 && 542 & 535 & 1172 \\
{\tt MANN\_a27} & 378/1&126 && 135 & 135 & -- & 134.86 & {\bf 132.44} && -- & -- & -- && -- & -- & -- \\
\rowcolor[gray]{.9} {\tt MANN\_a45} & 1035/0.4&345 && 360 & 360 & --
& 360 & {\bf 355.86} && 17 & 22 & -- && 17 & 22 & -- \\
{\tt MANN\_a9} & 45/7&16 && 18 & 18 & -- & -- & {\bf 17.11} && -- & -- & -- &&
-- & -- & -- \\
\rowcolor[gray]{.9} {\tt p\_hat300-1} & 300/75&8 && 12.94 & {\bf 12.43} & -- &
-- & 13.45 && 2851 & 141 & 4046 && 2703 & 174 & 4342 \\
{\tt p\_hat300-2} & 300/51&25 && 33.64 & 33.50 & 33.81 & 34.19 & {\bf 30.73} &&
923 & 89 & 1197 && 933 & 82 & 1153 \\
\rowcolor[gray]{.9} {\tt p\_hat300-3} & 300/26&36 && 51.95 & 51.82 & 54.12 &
53.19 & {\bf 49.79} && 995 & 764 & 1327 && 1086 & 759 & 1341 \\
{\tt san200\_0.7\_2} & 200/30&18 && 18.73 & 18.54 & 18.50 & 19.18 & {\bf 18} &&
747 & 265 & 815 && 676 & 259 & 531 \\
\rowcolor[gray]{.9} {\tt san200\_0.9\_1} & 200/10&70 && {\bf 70} & {\bf 70} & {\bf 70} & {\bf 70} & -- && 32 & 22 & 8 && 32 & 22 & 8 \\
{\tt san200\_0.9\_2} & 200/10&60 && {\bf 60} & {\bf 60} & {\bf 60} & {\bf 60} & -- && 163 & 113 & 10 && 148 & 101 & 5 \\
\rowcolor[gray]{.9} {\tt san200\_0.9\_3} & 200/10&44 && {\bf 44} & {\bf 44} & {\bf 44} & 44.80 & -- && 159 & 165 & 20 && 169 & 179 & 25 \\
{\tt san400\_0.5\_1} & 400/50&13 && 15.20 & 15.18 & {\bf 13.24} & 17.14 & -- && 217 & 35 & 57 && 214 & 30 & 57 \\
\rowcolor[gray]{.9} {\tt san400\_0.9\_1} & 400/10&100 && {\bf 100} & {\bf 100} & {\bf 100} & 100.40 & -- && 661 & 241 & 42 && 671 & 264 & 50 \\
{\tt sanr200\_0.7} & 200/30&18 && 30.92 & 30.21 & -- & -- & {\bf 29.45} && 1196
& 593 & 3446 && 1272 & 642 & 3830 \\
\rowcolor[gray]{.9} {\tt sanr200\_0.9} & 200/10&42 && 54.30 & {\bf 54.18}
& -- & -- & 54.52 && 265 & 880 & 287 && 250 & 973 & 332 \\

 \hline \hline
\end{tabular}}
\medskip
\caption{Upper bounds obtained with $SFS\_B$ and $SFS\_S$, and their comparison
with the ones from~\cite{Pardalos},~\cite{Rossi.Smriglio.01},
and~\cite{Giandomenico.Rossi.Smriglio.13}.}
\label{tab:resultsGeneralStrengthenedCortesDIMACS}
\end{table}

\end{landscape}


\begin{table}[htbp]
\centering
{\scriptsize
\begin{tabular}{lrrrrcrrrcrrr} \hline
\multicolumn{2}{l}{Instance} & \quad & \multicolumn{2}{l}{Upper bound} & \quad &
\multicolumn{3}{l}{$SFS\_B$: Number of cuts} & \quad &
\multicolumn{3}{l}{$SFS\_S$: Number of cuts} \\
\cline{1-2} \cline{4-5} \cline{7-9} \cline{11-13}
Graph & $\alpha$ && $SFS\_B$ & $SFS\_S$ && Clique & Rank & W-Rank && Clique & Rank
& W-Rank \\ \hline \hline
{\tt $G(100,10)$} &31.2 && {\bf 33.07} & {\bf 33.06} && 35.2 & 300.4 & 83.4 && 34.2 & 297.4 & 93.2 \\
\rowcolor[gray]{.9} {\tt $G(100,20)$} &20.2 && 23.87 & 23.86 && 158.2 & 393.8 & 406.4 && 158 & 389.2 & 455 \\
{\tt $G(100,30)$} &15 && 18.45 & {\bf 18.23} && 295 & 289.4 & 1083.2 && 297 & 319.6 & 1073.2 \\
\rowcolor[gray]{.9} {\tt $G(100,50)$} & 9.2 && 10.83 & {\bf 10.60} && 571 &
152.4 & 2066 && 556.4 & 156.6 & 1889.6 \\
{\tt $G(150,10)$} &37.4 && 44.07 & 43.99 && 117 & 559.4 & 217.8 && 117.6 & 573.2 & 224.2 \\
\rowcolor[gray]{.9} {\tt $G(150,20)$} &22.4 && 32.20 & 32.06 && 338.6 & 611.6 & 821.6 && 348.4 & 636.2 & 830.2 \\
{\tt $G(150,30)$} &16.6 && 24.57 & {\bf 24.19} && 684.2 & 465.6 & 2268.4 && 706.4 & 489.6 & 2294.2 \\
\rowcolor[gray]{.9} {\tt $G(150,50)$} & 10.2 && 14.15 & {\bf 13.53} && 1322.8 &
348 & 5012 && 1305.8 & 350.8 & 4670.4 \\
{\tt $G(200,10)$} &41.6 && 55.11 & 55.04 && 242.8 & 858.4 & 320 && 235.2 & 878.6 & 329.2 \\
\rowcolor[gray]{.9} {\tt $G(200,20)$} &26 && 40.87 & 40.65 && 583.4 & 847.2 & 1089.2 && 592.2 & 886 & 1181.2 \\
{\tt $G(200,30)$} &18 && 31.21 & {\bf 30.49} && 1159.6 & 575 & 3255.4 && 1224 & 667.4 & 3824.6 \\
\rowcolor[gray]{.9} {\tt $G(200,50)$} & 11 && 17.62 & {\bf 16.65} && 2861.8 &
558 & 6133.2 && 2856.8 & 560.2 & 6219.2 \\
{\tt $G(300,10)$} &-- && {\bf 75.81} & 75.91 && 520.8 & 1429.2 & 490.2 && 523.4 & 1458 & 480.2 \\
\rowcolor[gray]{.9} {\tt $G(300,20)$} &28.4 && 56.55 & {\bf 56.38} && 1228.4 & 1092.4 & 1223.4 && 1269 & 1120 & 1287.4 \\
{\tt $G(300,30)$} &20.2 && 43.93 & 43.31 && 2166.6 & 669 & 3903.2 && 2378.4 & 740.6 & 4041 \\
\rowcolor[gray]{.9} {\tt $G(300,50)$} & 12 && 24.47 & {\bf 22.90} && 5550.8 &
610 & 3550.2 && 5541.4 & 666.2 & 3707.8 \\

 \hline \hline
\end{tabular}}
\medskip
\caption{Upper bounds obtained and number of cuts applied with $SFS\_B$ and
$SFS\_S$.}
\label{tab:resultsGeneralStrengthenedCortesRand}
\end{table}


\section{Conclusions}
\label{sec:conc}

In this work we have presented general cut-generating procedures for the
standard formulation of the maximum stable set polytope, which are able to
generate both violated rank and generalized rank inequalities. The main
objective of these procedures is to generalize existing ones based on edge
projection, and employ a lifting procedure in order to construct general valid
inequalities from an initial clique inequality by undoing the operation of
clique projection in the original graph. The computational experiments show that
the proposed procedures are effective at generating general cuts, and may be
competitive in a general setting.

\bibliographystyle{plain}
\bibliography{./stab}

\newpage
\appendix

\section{Sufficient Conditions for Faceteness}

Consider the subsets $W_1, \ldots, W_{r+1}$. We show
in this section that if there exists $k > 0$ such that the
following conditions hold, for all $t \in \{ 1, \ldots, r \}$:
\begin{enumerate}[label=(\Roman*),ref=\Roman*,series=conds]
  \item $|W_t| = k$ and the subgraph of $G_{t-1}$ induced by $\bigcup_{i=1}^t
  W_i$ is $k$-partite with vertex classes $V_t^1, \ldots, V_t^k$,
  \label{it:Wk}
  \item $T_t := (V_t, \mathcal W_t)$ is a strong hypertree
  defined by $V_t := \bigcup_{i=1}^k V_t^i$ and $\mathcal W_t := \{ W_1,
  \ldots, W_t \}$,
  \label{it:Tt}
  \item for all $w \in V_t^0 := V \setminus V_t$, there exists $i \in \{ 1,
  \ldots, k \}$ such that $N_{G_{t-1}}(w) \cap V_t^i = \emptyset$,
  \label{it:vV0}
\end{enumerate}
then
\begin{enumerate}[label=(\emph{\roman*}),ref=\emph{\roman*},series=claims]
  \item $x_{W_t} \leq 1$ is facet defining for $F_{t-1}$.
  \label{it:Ft1}
\end{enumerate}
If, in addition to conditions~\eqref{it:Wk}--\eqref{it:Tt}, we
assume that
\begin{enumerate}[label=(\Roman*),ref=\Roman*,resume*=conds]
  \item for every $i \in \{ 1, \ldots, k-1 \}$ and $w \in V_r^0$ such that
  $N_{G_r}(w) \cap V_r^i \ne \emptyset$, one of the following holds: $v \in W_t \cap V_t^i$ is a
  neighbor of $w$ in $G$ or there exists $t' \in \{ 1, \ldots, r \}$ such that $W_t$ is a clique of
  $G_{t'-1}$, $W_t$ and $W_{t'}$ are adjacent in $T_r$, $v \not\in W_{t'}$,
  and $v' \in W_{t'} \cap V_r^i$ is a neighbor of $w$ in $G_{t'-1}$,
  \label{it:noclique}
  \item no $v \in V_t^k$ has neighbors in $V_r^0$, {\em i.e.} $N_{G_r}(v)
  \cap V_r^0 = \emptyset$,
  \label{it:VkV0}
\end{enumerate}
then we prove that
\begin{enumerate}[label=(\emph{\roman*}),ref=\emph{\roman*},resume*=claims]
  \item $f_t(x) \leq 1$ is facet defining for $F_t$,
  \label{it:Ft}
\end{enumerate}
considering that $f_r(x) = x_{W_{r+1}}$ and $W_{r+1}$ is a maximal clique of
$G_r$ such that $W_{r+1} \cap V_r^k = \emptyset$.

\subsection{Proof of~\eqref{it:Ft1}}

The proof of~\eqref{it:Ft1} depends on the dimension of $F_t$, established
next, which in turn depends on conditions~\eqref{it:Wk}--\eqref{it:vV0}.

\begin{lemma}
If~\eqref{it:Wk} holds, then $|W_\ell \cap V^i_t| = 1$, for all $\ell
\in \{ 1, \ldots, t \}$ and $i \in \{ 1, \ldots, k \}$.
\label{lem:interWV}
\end{lemma}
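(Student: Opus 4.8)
The plan is to exploit two facts: that each $W_\ell$ remains a clique in the later graph $G_{t-1}$, and that a clique in a $k$-partite graph can meet each vertex class at most once. Combining these with the size constraint $|W_\ell| = k$ forces exactly one vertex per class.

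First I would observe that clique projection only ever adds edges, so the edge sets form a chain $E_0 \subseteq E_1 \subseteq \cdots \subseteq E_{t-1}$. Consequently, for every $\ell \in \{1, \ldots, t\}$, the set $W_\ell$, which is a clique of $G_{\ell-1}$ by construction, is also a clique of $G_{t-1}$. Next I would invoke condition~\eqref{it:Wk}: the subgraph of $G_{t-1}$ induced by $V_t = \bigcup_{i=1}^t W_i$ is $k$-partite with classes $V_t^1, \ldots, V_t^k$, which means each class $V_t^i$ is an independent set in $G_{t-1}$, and that $|W_\ell| = k$.

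The core argument is then a counting step. Since $W_\ell \subseteq V_t$, the classes partition $W_\ell$, so $\sum_{i=1}^k |W_\ell \cap V_t^i| = |W_\ell| = k$. Because $W_\ell$ is a clique while each $V_t^i$ is independent in $G_{t-1}[V_t]$, no two vertices of $W_\ell$ can lie in the same class; hence $|W_\ell \cap V_t^i| \leq 1$ for every $i$. A sum of $k$ nonnegative integers, each at most $1$, equal to $k$ forces every summand to equal exactly $1$, giving $|W_\ell \cap V_t^i| = 1$ for all $i \in \{1, \ldots, k\}$, as claimed.

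This argument is essentially immediate once the edge-monotonicity of projection is noted; I expect no real obstacle. The only point warranting care is the first step, namely confirming that $W_\ell$ is still a clique of $G_{t-1}$ (and not merely of $G_{\ell-1}$), since the $k$-partition is stated with respect to $G_{t-1}$; this is precisely what the monotonicity $E_{\ell-1} \subseteq E_{t-1}$ guarantees.
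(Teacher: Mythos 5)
Your proof is correct and follows essentially the same route as the paper's: a clique meets each stable class at most once, and the counting constraint $|W_\ell|=k$ over the $k$ classes forces equality. Your explicit remark that edge-monotonicity of projection keeps $W_\ell$ a clique in $G_{t-1}$ is a small point the paper leaves implicit, but the argument is the same.
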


\begin{proof}
Since $W_\ell$ is a clique and $V^i_t$ is a stable set, $|W_\ell \cap V^i_t|
\leq 1$. By condition~\eqref{it:Wk}, $|W_\ell|=k$ and there are at least $k$
stable sets intersecting $W_\ell$. Therefore, $|W_\ell \cap V^i_t| \geq 1$.
\end{proof}

\begin{lemma}[Adapted from Lemma 3.1 of~\cite{XavierCampelo11}]
If~\eqref{it:Wk}--\eqref{it:vV0} hold, then $dim(F_t) = n -
t$.
\label{lem:dimFt}
\end{lemma}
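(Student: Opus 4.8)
The plan is to bound $\dim(F_t)$ from above and below separately, matching both bounds at $n-t$. For the upper bound I would exhibit $t$ linearly independent valid equations. The equations $\mathbf 1_{W_j}^\top x = 1$, $j=1,\dots,t$, are valid on $F_t$ by definition, so it suffices to show the incidence vectors $\mathbf 1_{W_1},\dots,\mathbf 1_{W_t}$ are linearly independent. Here I would invoke the strong hypertree structure~\eqref{it:Tt}: reversing the leaf-peeling order gives a build-up order in which each newly added hyperedge shares $k-1$ vertices with an earlier one and hence contributes a vertex $v_\ell$ absent from all previously added hyperedges. Restricting the incidence matrix to the columns $v_1,\dots,v_t$ then yields a triangular matrix with $1$'s on the diagonal, so the rank is $t$ and $\dim(F_t)\le n-t$.

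The substance is the lower bound, which I would prove by explicitly constructing $n-t+1$ affinely independent points of $F_t$, i.e.\ stable sets of $G$ meeting each $W_j$ in exactly one vertex. Two families of building blocks suggest themselves. First, the $k$ colour classes $V_t^1,\dots,V_t^k$: since clique projection only adds edges, condition~\eqref{it:Wk} (stated for $G_{t-1}$) forces $G[V_t]$ to be $k$-partite with the same classes, so each $V_t^i$ is a stable set of $G$; and by Lemma~\ref{lem:interWV} each meets every $W_j$ in exactly one vertex, so $\mathbf 1_{V_t^i}\in F_t$. Second, for each outside vertex $w\in V_t^0$, condition~\eqref{it:vV0} supplies a class index $i(w)$ with $N_{G_{t-1}}(w)\cap V_t^{i(w)}=\emptyset$, hence also $N_{G}(w)\cap V_t^{i(w)}=\emptyset$; thus $V_t^{i(w)}\cup\{w\}$ is stable in $G$, and as $w\notin\bigcup_j W_j=V_t$ it still meets each $W_j$ once, so it too lies in $F_t$.

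To see the counts work out, I would use that a strong hypertree on $t$ hyperedges of size $k$ has exactly $|V_t|=k+t-1$ vertices (the base edge has $k$ vertices and each subsequent edge adds one). Taking $p_0=\mathbf 1_{V_t^1}$ as anchor, the remaining colour classes give $k-1$ difference vectors supported inside $V_t$, and the $|V_t^0|=n-(k+t-1)$ sets $V_t^{i(w)}\cup\{w\}$ give difference vectors of the form $e_w+d_{i(w)}$. The total is $1+(k-1)+(n-k-t+1)=n-t+1$ points. Affine independence is then routine: reading off the $V_t^0$-coordinates forces the coefficients of the $e_w$-directions to vanish, after which the disjoint nonempty supports of the colour classes kill the remaining coefficients. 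This yields $\dim(F_t)\ge n-t$, completing the proof.

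The hard part, and the reason I would avoid a naive single-constraint induction, is that passing from $F_{t-1}$ to $F_t=F_{t-1}\cap\{x_{W_t}=1\}$ does \emph{not} transparently drop the dimension by one. If $v^*$ is the private vertex of the leaf $W_t$ and $u^*$ its partner in the adjacent hyperedge, then on $F_{t-1}$ the equation $x_{W_t}=1$ collapses to $x_{v^*}=x_{u^*}$; worse, $x_{v^*}\le x_{u^*}$ is already valid on $F_{t-1}$, so $F_t$ is a \emph{face} of $F_{t-1}$ rather than an interior section, and a priori it could be a low-dimensional one. The global construction circumvents this entirely, with the hypertree identity $|V_t|=k+t-1$ doing the essential bookkeeping so that the $k-1$ colour-class directions exactly fill the dimension available inside $V_t$, while condition~\eqref{it:vV0} is precisely what makes every outside vertex freely attachable to some colour class.
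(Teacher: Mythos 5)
Your proof is correct and follows essentially the same route as the paper: the upper bound via the rank-$t$ incidence matrix of the strong hypertree, and the lower bound via the $k$ colour classes $V_t^i$ together with the points $V_t^{i(w)}\cup\{w\}$ for $w\in V_t^0$, totalling $n-t+1$ affinely independent points. The only difference is that you spell out details the paper leaves implicit (the triangularization of the incidence matrix, the descent from $G_{t-1}$ to $G$, and the affine-independence check), all of which are accurate.
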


\begin{proof}
Because the incidence matrix of $T_t$ has rank $t$ due to
conditions~\eqref{it:Tt}, it follows that $dim(F_t) \leq n
- t$. To prove that $dim(F_t) \geq n - t$, we exhibit $n-t+1$ affinely independent vectors of
$F_t$. For this purpose, define $x^i$ to be
the incidence vector of $V_t^i$, for every $i \in \{1, \ldots, k\}$. Clearly,
$x^i \in STAB(G)$ and, by Lemma~\ref{lem:interWV}, $x^i_{W_\ell} = 1$ for all $\ell
\in \{ 1, \ldots, t \}$, which
means that $x^i \in F_t$. For every $v \in V_t^0$, define $y_v = x^i + e_v$ where $i \in \{1, \ldots, k\}$ is such that $v$ is not adjacent to any vertex in $V_t^i$, by
condition~\eqref{it:vV0}, and $e_v$ is the incidence vector of $\{ v \}$.
Again, it is easy to see that $y_v \in F_t$. The $|V_t^0| + k = n - (k + t - 1)
+ k = n - t + 1$ points $\{x^i\}^k_{i=1} \cup \{y_v\}_{v\in V_t^0}$ are affinely
independent.
\end{proof}

\begin{theorem}
If~\eqref{it:Wk}--\eqref{it:vV0} hold, then~\eqref{it:Ft1}
holds for all $t \in \{ 1, \ldots, r \}$.
\label{thm:condi}
\end{theorem}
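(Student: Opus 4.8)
The plan is to obtain Theorem~\ref{thm:condi} essentially for free from the dimension formula of Lemma~\ref{lem:dimFt} together with the validity statement of Corollary~\ref{cor:WtvalidFt1}, so that claim~\eqref{it:Ft1} becomes a one-line dimension comparison. Fix $t \in \{ 1, \ldots, r \}$. By Corollary~\ref{cor:WtvalidFt1} the inequality $x_{W_t} \leq 1$ is valid for $F_{t-1}$, so the only thing left to verify is that the face it induces has dimension exactly $\dim(F_{t-1}) - 1$. The crucial observation is that this face is nothing but $F_t$: by the very definition of the nested sets, $F_t = \{ x \in F_{t-1} \mid x_{W_t} = 1 \}$ is precisely the subset of $F_{t-1}$ on which $x_{W_t} \leq 1$ holds with equality.

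I would then compute the two dimensions separately and compare. For $t \geq 2$, conditions~\eqref{it:Wk}--\eqref{it:vV0} are assumed to hold for every index in $\{ 1, \ldots, r \}$, hence in particular at index $t-1$, so Lemma~\ref{lem:dimFt} applied at index $t-1$ gives $\dim(F_{t-1}) = n - (t-1) = n - t + 1$, while the same lemma applied at index $t$ gives $\dim(F_t) = n - t$. Consequently $\dim(F_t) = \dim(F_{t-1}) - 1$, and since $F_t$ is nonempty (its dimension $n - t$ is nonnegative, as $n \geq k + t - 1 \geq t$ by condition~\eqref{it:Wk}), the face $F_t$ is a genuine facet of $F_{t-1}$, which is exactly claim~\eqref{it:Ft1}.

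The one place requiring a moment's care, and the only real ``obstacle'', is the base case $t = 1$, where Lemma~\ref{lem:dimFt} cannot be invoked at index $0$. Here I would argue directly that $F_0 = STAB(G)$ is full-dimensional: the origin together with the characteristic vectors $e_v$ of the singleton stable sets supplies $n+1$ affinely independent points of $STAB(G)$, so $\dim(F_0) = n = n - (t-1)$ with $t = 1$. Combined with $\dim(F_1) = n - 1$ from Lemma~\ref{lem:dimFt} at index $1$, the same comparison $\dim(F_1) = \dim(F_0) - 1$ yields the facetness of $x_{W_1} \leq 1$ for $F_0$. Beyond this elementary base case I anticipate no difficulty, since all the structural and combinatorial effort has already been discharged in establishing Lemma~\ref{lem:dimFt}.
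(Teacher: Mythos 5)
Your proposal is correct and follows essentially the same route as the paper, which also deduces facetness from the dimension drop $\dim(F_t)=\dim(F_{t-1})-1$ given by Lemma~\ref{lem:dimFt} together with the validity of $x_{W_t}\leq 1$ from Corollary~\ref{cor:WtvalidFt1}. Your explicit treatment of the base case $t=1$ via the full-dimensionality of $STAB(G)$ is a detail the paper leaves implicit, but it is the same argument.
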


\begin{proof}
By Lemma~\ref{lem:dimFt}, $dim(F_t) = dim(F_{t-1})-1$. Thus, by
Corollary~\ref{cor:WtvalidFt1}, $F_t = \{x \in F_{t-1} \mid x_{W_t} = 1\}$ is a
facet of $F_{t-1}$.
\end{proof}

The reader might observe that conditions~\eqref{it:Wk}--\eqref{it:vV0} are
sufficient for~\eqref{it:Ft} when $t < r$.

\begin{theorem}
If~\eqref{it:Wk}--\eqref{it:vV0} hold for all $t \in \{ 1,
\ldots, r \}$, $f_r(x) = x_{W_r}$, and $d = 1$, then~\eqref{it:Ft} holds for all
$t \in \{ 1, \ldots, r-1 \}$.
\label{thm:simplercondi}
\end{theorem}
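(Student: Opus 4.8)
The plan is to establish claim~\eqref{it:Ft} for $t \in \{1,\ldots,r-1\}$ by \emph{downward induction} on $t$, using a relativized version of the faceteness argument in Lemma~\ref{lem:lifting}: the face $F_t$ plays the role of $STAB(G)$, the subface $F_{t+1}=\{x\in F_t\mid x_{W_{t+1}}=1\}$ plays the role of $F_W$, and the clique $W_{t+1}$ plays the role of $W$. The essential observation is that every ingredient of this argument is supplied by conditions~\eqref{it:Wk}--\eqref{it:vV0} alone, so conditions~\eqref{it:noclique}--\eqref{it:VkV0} are never invoked; this is precisely what distinguishes the statement from the full result. The base of the induction is the level $t=r-1$, where the lifted inequality $f_{r-1}(x)\leq 1$ is the clique inequality $x_{W_r}\leq 1$: by Theorem~\ref{thm:condi} (that is, claim~\eqref{it:Ft1}) applied with index $r$, which is available because \eqref{it:Wk}--\eqref{it:vV0} are assumed for all $t\le r$, this inequality is facet defining for $F_{r-1}$. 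It then remains to lift over $W_{r-1},\ldots,W_1$ one step at a time.

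For the inductive step I would assume $f_{t+1}(x)\leq 1$ is facet defining for $F_{t+1}$ and deduce the same for $f_t$ on $F_t$. Validity of $f_t(x)\leq 1$ on $F_t$ is already granted by Lemma~\ref{lem:strenglifting}, so the task reduces to exhibiting $\dim(F_t)=n-t$ affinely independent points of $F_t$ on which $f_t(x)=1$; the value $\dim(F_t)=n-t$ is Lemma~\ref{lem:dimFt}. These points split into two groups. Since $F_{t+1}$ has dimension $n-t-1$, the inductive hypothesis furnishes $n-t-1$ affinely independent points of $F_{t+1}$ with $f_{t+1}(x)=1$; each of them has $x_{W_{t+1}}=1$, so $f_t(x)=f_{t+1}(x)+\lambda^S_{t+1}(x_{W_{t+1}}-1)=1$. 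For the one remaining point I would use tightness of the lifting factor: because $\dim(F_{t+1})<\dim(F_t)$, the set $\{x\in\mathcal S_t\mid x_{W_{t+1}}=0\}$ is nonempty (otherwise $F_t$ and $F_{t+1}$ would coincide), so the maximum defining $\lambda^S_{t+1}$ in Lemma~\ref{lem:strenglifting} is attained at some $\hat x\in\mathcal S_t$ with $\hat x_{W_{t+1}}=0$ and $f_{t+1}(\hat x)=1+\lambda^S_{t+1}$; consequently $f_t(\hat x)=(1+\lambda^S_{t+1})-\lambda^S_{t+1}=1$ as well.

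Finally I would verify affine independence of the full collection. The $n-t-1$ points of the first group all lie in the hyperplane $x_{W_{t+1}}=1$, whereas $\hat x$ lies in $x_{W_{t+1}}=0$, so $\hat x$ is not an affine combination of the others, and the $n-t$ points are affinely independent. Together with validity this shows $f_t(x)\leq 1$ is facet defining for $F_t$, completing the step; note that throughout we only used $\dim(F_t)=n-t$, the faceteness of $x_{W_{t+1}}\leq1$ on $F_t$ from claim~\eqref{it:Ft1}, and the definition of $\lambda^S_{t+1}$, all of which stem from \eqref{it:Wk}--\eqref{it:vV0}. I expect the main obstacle to be the bookkeeping around the extra point: confirming that the maximum defining $\lambda^S_{t+1}$ is realized by a genuine stable set of $F_t$ avoiding $W_{t+1}$, and controlling the interaction of the two groups across the hyperplane $x_{W_{t+1}}=1$, rather than the dimension count itself.
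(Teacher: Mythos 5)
Your proof is correct and takes essentially the same route as the paper's: an induction on $t$ in which each lifting step preserves facetness because $x_{W_{t+1}}\leq 1$ is facet defining for $F_t$ by~\eqref{it:Ft1} (so $F_{t+1}$ is a facet of $F_t$) and the lifting factor $\lambda^S_{t+1}$ is tight. The only difference is that the paper simply invokes Lemma~\ref{lem:lifting} for the inductive step, whereas you inline its facetness argument in the relativized setting (the $n-t-1$ tight points supplied by the induction hypothesis on $F_{t+1}$, plus the maximizer $\hat x$ of $\lambda^S_{t+1}$ with $\hat x_{W_{t+1}}=0$), which is a faithful — indeed slightly more careful — rendering of the same argument.
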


\begin{proof}
By induction on $t$. For $t = r$, $x_{W_{r+1}} \leq 1$ is facet defining for
$F_r$ by~\eqref{it:Ft1}. For $t < r$, $x_{W_{t+1}} \leq 1$ is facet defining for
$F_t$ by~\eqref{it:Ft1} and $f_{t+1}(x)$ is facet defining for $F_{t+1} = \{ x \in F_t \mid x_{W_{t+1}} = 1 \}$ by induction hypothesis.
Thus, the result follows by Lemma~\ref{lem:lifting}.
\end{proof}

\subsection{Proof of~\eqref{it:Ft}}

Conditions~\eqref{it:Wk}--\eqref{it:Tt} imply the
following property of $G_r$ and the stable sets of $G$ that cover $W_1,
\ldots, W_r$. A {\em strong hyperpath} is a
strong hypertree with exactly two vertices of degree 1.

\begin{lemma}[Adapted from Lemma 3.2 of~\cite{XavierCampelo11}]
If~\eqref{it:Wk}--\eqref{it:Tt} hold and $x \in F_t$, then $x_{\{
u \}} = x_{\{ v \}}$, for all $i \in \{ 1, \ldots, k \}$ and $\{ u, v \} \subseteq V_t^i$.
\label{lem:xuv}
\end{lemma}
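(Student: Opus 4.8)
The plan is to argue by induction on the number $t$ of hyperedges of the strong hypertree $T_t$, following its recursive leaf-peeling structure and using Lemma~\ref{lem:interWV} throughout. First I would record the consequence of condition~\eqref{it:Wk} and Lemma~\ref{lem:interWV} that every hyperedge $W_\ell$ meets every class $V_t^i$ in exactly one vertex; writing $w_\ell^i$ for that vertex, the membership $x \in F_t$ yields $\sum_{i=1}^k x_{w_\ell^i} = x_{W_\ell} = 1$ for each $\ell$. These linear equalities, valid for all of $F_t$ (hence also for fractional $x$), are the only facts about $x$ the argument uses.

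For the base case $\mathcal{W}_t = \{ V_t \}$ there is a single hyperedge equal to $V_t$, so $|V_t| = k$ and, by Lemma~\ref{lem:interWV}, each class is a singleton; the claim is then vacuous. For the inductive step I would invoke condition~\eqref{it:Tt}: there is a degree-one vertex $v$ contained in a hyperedge $W_i$ that shares exactly $k-1$ vertices with another hyperedge $W_j$, and $T' := (V_t \setminus \{ v \}, \mathcal{W}_t \setminus \{ W_i \})$ is again a strong hypertree. Since $v$ appears only in $W_i$, it is the unique vertex of $W_i$ not in $W_j$. Because $W_i$ and $W_j$ each meet every class once and their $k-1$ shared vertices lie in $k-1$ distinct classes, there is exactly one class $i_0$ with $W_i \cap V_t^{i_0} = \{ v \}$ and $W_j \cap V_t^{i_0} = \{ v' \}$ for some $v' \neq v$, while $W_i$ and $W_j$ agree on every other class. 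Subtracting $x_{W_j} = 1$ from $x_{W_i} = 1$ cancels the shared coordinates and leaves $x_v = x_{v'}$.

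Next I would apply the induction hypothesis to $T'$, whose vertex classes are the $V_t^i$ with $V_t^{i_0}$ replaced by $V_t^{i_0} \setminus \{ v \}$: as $x$ still satisfies $x_{W_\ell} = 1$ for every $\ell \neq i$, all vertices within each class of $T'$ share a common $x$-value. For every class other than $i_0$ this already gives the desired equality in $T_t$. For class $i_0$, let $c$ be the common value of $V_t^{i_0} \setminus \{ v \}$; it is attained in particular at $v'$, and $x_v = x_{v'} = c$ extends the equality to $v$, so all of $V_t^{i_0}$ carries the value $c$. This closes the induction.

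The main obstacle I anticipate is the structural bookkeeping in the inductive step: confirming that the peeled vertex $v$ is genuinely the non-shared vertex of $W_i$ (forced by its having degree one, which is implicit in $T'$ being a well-defined hypergraph on $V_t \setminus \{ v \}$), that $W_i$ and $W_j$ differ in precisely one class, and that the remaining hyperedges still meet the truncated class $V_t^{i_0} \setminus \{ v \}$ so that both Lemma~\ref{lem:interWV} and the induction hypothesis remain applicable to $T'$. Once these points are settled, the cancellation giving $x_v = x_{v'}$ and the propagation of the common value are immediate.
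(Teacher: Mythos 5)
Your proof is correct, but it runs the induction in a genuinely different direction from the paper. The paper fixes the two vertices $u,v \in V_t^i$, takes the strong hyperpath $W_{t_1},\ldots,W_{t_q}$ connecting them inside $T_t$, and inducts on the length $q$ of that path, peeling off the first hyperedge at each step; you instead induct on the number of hyperedges of the whole hypertree, using the leaf-peeling recursion built into the very definition of strong hypertree (condition~2 of Theorem~\ref{thm:condii}) and establishing the constancy on \emph{all} classes simultaneously. Both arguments reduce to the same elementary cancellation --- for two hyperedges sharing $k-1$ vertices, $x_{W_i}-x_{W_j}=x_v-x_{v'}=0$ with $v,v'$ the unique unshared vertices, which Lemma~\ref{lem:interWV} forces into the same class --- so neither is deeper than the other. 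What your version buys is self-containedness: the paper's proof silently relies on the fact that any two vertices of a strong hypertree are joined by a strong hyperpath whose consecutive hyperedges share $k-1$ vertices, a structural property that is asserted but not proved here (it is imported from the reference), whereas your argument uses only the recursive definition actually stated in the paper. What the paper's version buys is locality (only the equalities along the $u$--$v$ path are used) and a slightly lighter inductive hypothesis, since it never has to re-verify that the peeled structure still satisfies the standing conditions. Your closing bookkeeping --- that the degree-one vertex must be the unshared vertex of $W_i$, that $W_i$ and $W_j$ differ in exactly one class, and that the remaining hyperedges still meet the truncated class --- is exactly the right list of points to check, and all of them do go through.
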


\begin{proof}
Considering that conditions~\eqref{it:Wk}--\eqref{it:Tt} hold, let $W_{t_1},
\ldots, W_{t_q}$ be the strong hyperpath in $T_t$ connecting $u, v$. We
prove the result by induction on $q$. If $q = 2$, then $x_{W_{t_1}}
- x_{W_{t_2}} = x_{\{u\}} - x_{\{v\}} = 0$. Otherwise, $q > 2$. Let $w \in
W_{t_2} \setminus W_{t_1}$. Since $u \not\in W_{t_2}$, we
conclude that $w \in V_{t_2}^i$ by Lemma~\ref{lem:interWV}. Hence, $W_{t_1},
W_{t_2}$ is a strong hyperpath with 2 hyperedges connecting $u,w$, which gives $x_{\{ u \}} = x_{\{ w \}}$. Moreover, $W_{t_p}, \ldots,
W_{t_q}$, for $p = \max \{ j \mid w \in W_{t_j} \}$, is a strong hyperpath with less than $q$
hyperedges connecting two vertices of $V_t^i$. By inductive hypothesis,
$x_{\{ w \}} = x_{\{ v \}}$. Therefore, $x_{\{ u \}} = x_{\{ v \}}$.
\end{proof}

Differently from~\cite{XavierCampelo11}, we determine $W_{r+1}$ in $G_r$
(instead of defining an auxiliary graph). For this purpose, we use the
following property of $G_r$ due to
conditions~\eqref{it:Wk}--\eqref{it:Tt} and~\eqref{it:noclique}.

\begin{lemma}
Let $v \in V_t^i \cap W_t$, for some $i \in \{ 1, \ldots, k-1 \}$, and $w \in
V_r^0$ be such that $V_r^i \cap N_{G_r}(w) \ne \emptyset$.
If~\eqref{it:Wk}--\eqref{it:Tt} and~\eqref{it:noclique} hold, then $vw \in E_r$.
\label{lem:equalVi}
\end{lemma}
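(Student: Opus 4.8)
The plan is to argue by the dichotomy built into condition~\eqref{it:noclique}. Since $v \in W_t \cap V_t^i$ with $i \leq k-1$ and $w \in V_r^0$ satisfies $V_r^i \cap N_{G_r}(w) \neq \emptyset$, condition~\eqref{it:noclique} leaves exactly two possibilities. In the first, $v$ is a neighbor of $w$ in $G$, so $vw \in E = E_0$; as clique projection only enlarges the edge set, $E_0 \subseteq E_1 \subseteq \cdots \subseteq E_r$, whence $vw \in E_r$ and we are done. All the work is in the second case, where there is an index $t'$ with $W_t$ a clique of $G_{t'-1}$, with $W_t$ and $W_{t'}$ adjacent in $T_r$, with $v \notin W_{t'}$, and with a vertex $v' \in W_{t'} \cap V_r^i$ such that $v'w \in E_{t'-1}$.

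The first step I would take in this case is to locate precisely how $W_t$ and $W_{t'}$ differ. Adjacency in the strong hypertree $T_r$ means $|W_t \cap W_{t'}| = k-1$, so $W_t$ and $W_{t'}$ each contain a single private vertex; since $v \in W_t \setminus W_{t'}$, that private vertex of $W_t$ is $v$. Applying Lemma~\ref{lem:interWV} at $t = r$ (conditions~\eqref{it:Wk}--\eqref{it:Tt} hold for all indices, in particular for $r$) gives $|W_\ell \cap V_r^j| = 1$ for every $\ell \leq r$ and every class $j$. Using $v \in V_t^i \subseteq V_r^i$ (the classes are labelled consistently as the hypertree grows, so $V_t^i \subseteq V_r^i$) we get $W_t \cap V_r^i = \{v\}$, and similarly $W_{t'} \cap V_r^i = \{v'\}$. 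Because $v \neq v'$ (as $v \notin W_{t'} \ni v'$), the unique class on which the representatives of $W_t$ and $W_{t'}$ disagree must be class $i$; they coincide on every other class. Hence $W_t \cap W_{t'} = W_t \setminus \{v\} = W_{t'} \setminus \{v'\}$, i.e. $v'$ is the private vertex of $W_{t'}$.

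The final step is to realise $vw$ as an edge produced by the projection of $W_{t'}$. Since $W_t$ is a clique of $G_{t'-1}$, the vertex $v$ is adjacent in $G_{t'-1}$ to every vertex of $W_t \setminus \{v\} = W_{t'} \setminus \{v'\}$, so $W_{t'} \setminus \{v'\} \subseteq N_{G_{t'-1}}(v)$; combined with $v' \in N_{G_{t'-1}}(w)$ (from $v'w \in E_{t'-1}$) this yields $W_{t'} \subseteq N_{G_{t'-1}}(v) \cup N_{G_{t'-1}}(w)$, a valid configuration because $v \notin W_{t'}$ by hypothesis and $w \in V_r^0$ while $W_{t'} \subseteq V_r$, so $v,w \notin W_{t'}$. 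By the definition of the clique projection $G_{t'} = G_{t'-1} \mid W_{t'}$, the pair $vw$ therefore lies in $E_{t'}$: either it already belonged to $E_{t'-1}$, or it is added as a false edge. In both cases $vw \in E_{t'} \subseteq E_r$, as claimed. I expect the main obstacle to be the bookkeeping of the middle paragraph, namely verifying that $W_t$ and $W_{t'}$ differ in exactly their class-$i$ representatives $v$ and $v'$; this is the structural fact that forces $W_{t'} \subseteq N_{G_{t'-1}}(v) \cup N_{G_{t'-1}}(w)$ and thus places $vw$ into the projection.
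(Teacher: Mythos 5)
Your proof is correct and follows the same route as the paper's: split on whether $vw \in E$, and otherwise use the index $t'$ supplied by condition~\eqref{it:noclique} to establish $W_{t'} \subseteq N_{G_{t'-1}}(v) \cup N_{G_{t'-1}}(w)$, so that the projection of $W_{t'}$ places $vw$ in $E_{t'} \subseteq E_r$. The paper asserts that containment without justification, whereas your middle paragraph (using Lemma~\ref{lem:interWV} to show $W_t \setminus \{v\} = W_{t'} \setminus \{v'\}$, hence $W_{t'}\setminus\{v'\}\subseteq N_{G_{t'-1}}(v)$ because $W_t$ is a clique of $G_{t'-1}$) supplies exactly the bookkeeping the paper leaves implicit.
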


\begin{proof}
If $vw \in E$, then the lemma is trivially valid. Otherwise, by
condition~\eqref{it:noclique}, let $t' \in \{ 1, \ldots, r \}$ be such that
$W_t$ is a clique of $G_{t'-1}$, $W_t$ and $W_{t'}$ are adjacent in $T_r$
(considering conditions~\eqref{it:Wk}--\eqref{it:Tt}), $v \not\in W_{t'}$ and
$v' \in W_{t'} \cap V_r^i$ is a neighbor of $w$ in $G_{t'-1}$. In this situation, $W_{t'} \subseteq
N_{G_{t'-1}}(v) \cup N_{G_{t'-1}}(w)$ implies $vw \in E_{t'} \subseteq E_r$ by
the clique projection of $W_{t'}$.
\end{proof}

To show~\eqref{it:Ft}, we still need the following property of the subgraph of
$G_r$ induced by $V_r^0$ and a certain subset of vertices. Notation $\cong$ denotes
the affine isomorphism relation~\cite{XavierCampelo11}.

\begin{lemma}[Adapted from Lemma 3.3 of~\cite{XavierCampelo11}]
If \eqref{it:Wk}--\eqref{it:VkV0} hold for $t=r$, then
$F_r \cong STAB(G_r[V_r^0 \cup R])$ where $R \subseteq V$ is such that
$|R \cap V_r^i| = 1$, for all $i \in \{1, \ldots, k-1\}$, and $R \cap V_r^k
= \emptyset$.
\label{lem:cong}
\end{lemma}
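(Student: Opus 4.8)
The plan is to exhibit an explicit affine bijection between $F_r$ and $STAB(G_r[V_r^0 \cup R])$ realised by coordinate restriction. Define $\pi\colon \mathbb{R}^n \to \mathbb{R}^{V_r^0 \cup R}$ by $\pi(x) = (x_v)_{v \in V_r^0 \cup R}$, and the candidate inverse $\sigma$ that rebuilds a full vector from its restriction $y$ by keeping the $V_r^0$-coordinates, copying $y_{r_i}$ onto every vertex of $V_r^i$ for $i \in \{1,\dots,k-1\}$, and setting every coordinate of $V_r^k$ equal to $1 - \sum_{i=1}^{k-1} y_{r_i}$. First I would check that $\pi$ and $\sigma$ are mutually inverse affine maps on $\operatorname{aff}(F_r)$: Lemma~\ref{lem:xuv} guarantees that every $x \in F_r$ is constant on each class $V_r^i$, and Lemma~\ref{lem:interWV} together with the defining equations $x_{W_t} = 1$ shows that the common values $z_i$ satisfy the single relation $\sum_{i=1}^k z_i = 1$ (the same for every $t$, since each $W_t$ meets each class exactly once), so $z_k$ is recovered from $z_1,\dots,z_{k-1} = y_{r_1},\dots,y_{r_{k-1}}$. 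A dimension count closes this step: by Lemma~\ref{lem:dimFt}, $\dim F_r = n-r = |V_r^0| + (k-1) = |V_r^0 \cup R|$, so $\pi$ restricts to a linear isomorphism of $\operatorname{aff}(F_r)$ onto $\mathbb{R}^{V_r^0 \cup R}$.

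It then remains to prove $\pi(F_r) = STAB(G_r[V_r^0 \cup R])$, which I would do by matching extreme points, since $F_r$ is a face of the integral polytope $STAB(G)$. For the forward direction, a vertex of $F_r$ is the characteristic vector $x^S$ of a stable set $S$ of $G$ with $|S \cap W_t| = 1$ for all $t$; by Lemma~\ref{lem:FtSTABGt} such an $S$ is in fact stable in $G_r$, and by Lemma~\ref{lem:xuv} each class lies entirely inside or outside $S$, with exactly one class $V_r^{i^*}$ contained in $S$ (two would give $|S \cap W_t| \geq 2$, none would give $|S \cap W_t| = 0$). Hence $S' := S \cap (V_r^0 \cup R) \subseteq S$ is a stable set of the induced subgraph whose characteristic vector equals $\pi(x^S)$, and the assignment is injective because $S'$ recovers $S \cap V_r^0$ directly and pins down $i^*$ (through $S' \cap R$, with $i^* = k$ when $S' \cap R = \emptyset$).

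For the reverse direction I would start from a stable set $S'$ of $G_r[V_r^0 \cup R]$ and build $S = (S' \cap V_r^0) \cup V_r^{i^*}$, with $i^*$ the class represented in $S' \cap R$, or $i^* = k$ if $S' \cap R = \emptyset$. Stability of $S$ in $G_r$ reduces to excluding edges between $S' \cap V_r^0$ and $V_r^{i^*}$ (within-class independence is preserved by the projection of $W_r$, whose class-$i$ vertex is non-adjacent to the rest of $V_r^i$, and $S' \cap V_r^0$ is stable by hypothesis). When $i^* = k$ the absence of such edges is exactly condition~\eqref{it:VkV0}; when $i^* \leq k-1$ I would argue by contradiction: if some $w \in S' \cap V_r^0$ had a neighbour in $V_r^{i^*}$, then, picking $t$ with $r_{i^*} \in V_r^{i^*} \cap W_t$ (Lemma~\ref{lem:interWV}), Lemma~\ref{lem:equalVi} would force $w\,r_{i^*} \in E_r$, contradicting the stability of $S'$. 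Finally $|S \cap W_t| = 1$ for all $t$ by Lemma~\ref{lem:interWV}, so $x^S \in F_r$ and $\pi(x^S)$ is the characteristic vector of $S'$.

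The step I expect to be the main obstacle is the well-definedness of this reverse map, i.e. that $i^*$ is unambiguous, which demands $|S' \cap R| \leq 1$ and hence that $R$ is a clique of $G_r$ (equivalently, that $G_r[V_r]$ is complete $k$-partite). This is genuinely necessary rather than cosmetic: if two representatives $r_i, r_j$ could coexist in a stable set of the induced subgraph, the corresponding point would have $z_k = 1 - 2 < 0$ and could not lie in $\pi(F_r)$, so the asserted isomorphism would fail. I would establish this cliqueness from the strong-hypertree structure~\eqref{it:Tt} together with the edge-creating mechanism of clique projection and condition~\eqref{it:noclique} — tracing how the cross-class edges among $V_r$ are supplied by projecting the hyperedges not containing the relevant endpoints — exactly as in the corresponding step of~\cite{XavierCampelo11}. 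Granting it, $\pi$ and $\sigma$ restrict to a bijection between the vertex sets, and taking convex hulls yields $\pi(F_r) = STAB(G_r[V_r^0 \cup R])$, witnessing the affine isomorphism $F_r \cong STAB(G_r[V_r^0 \cup R])$.
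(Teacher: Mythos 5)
Your maps are exactly the paper's: restrict coordinates to $V_r^0\cup R$ in one direction, and in the other copy $y_{v_i}$ onto all of $V_r^i$ and put $1-\sum_{i=1}^{k-1}y_{v_i}$ on $V_r^k$. The paper verifies the extension map on a point $y$ of $STAB(G_r[V_r^0\cup R])$ by checking the edge inequalities and $x_{W_\ell}=1$, and verifies injectivity of the restriction via Lemma~\ref{lem:xuv}; your reorganization (affine hull plus dimension count from Lemma~\ref{lem:dimFt}, then matching integral extreme points) is the same argument carried out at the vertices, and if anything it is tighter, since membership in $STAB(G)$ is a convex-hull condition that checking edge inequalities on a fractional $y$ does not by itself establish. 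Your use of Lemma~\ref{lem:equalVi} to rule out edges between $S'\cap V_r^0$ and the chosen class, and of condition~(\ref{it:VkV0}) for the class $V_r^k$, coincides with the paper's.

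The one place where your proposal is not a complete proof is the step you yourself flag: that $|S'\cap R|\le 1$ for every stable set $S'$ of $G_r[V_r^0\cup R]$, i.e.\ that any transversal of $V_r^1,\ldots,V_r^{k-1}$ is a clique of $G_r$. You correctly observe that this is not cosmetic — without it the extension map sends a vertex of $STAB(G_r[V_r^0\cup R])$ to a point with a negative $V_r^k$-coordinate, and the restriction of $F_r$ satisfies $\sum_{i<k}y_{v_i}\le 1$ while the target polytope does not, so the claimed isomorphism fails outright. You announce that you would derive this from the strong-hypertree structure and the false-edge mechanism ``as in \cite{XavierCampelo11}'' but do not carry it out, so your argument is incomplete at that point. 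It is worth noting that the paper's own proof is silent on exactly the same issue: it checks $x_u+x_w\le 1$ on edges but never that $x_u=1-\sum_{i=1}^{k-1}y_{v_i}\ge 0$ on $V_r^k$. So you have identified a real obligation (presumably discharged in the auxiliary-graph construction of \cite{XavierCampelo11}) rather than invented a spurious one; to make your proof self-contained you would need to show, by induction along the strong hyperpath joining $u\in V_r^i$ to $v\in V_r^j$ ($i\ne j$) and using the projections of the intermediate hyperedges, that $uv\in E_r$.
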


\begin{proof}
In what follows, we denote by $v_i$ the unique vertex in $R \cap
V_r^i$, for all $i \in \{1, \ldots, k-1 \}$.
\begin{description}
\item[{$STAB(G_r[V_r^0 \cup R]) \to F_r$:}] Take a point in $y \in
STAB(G_r[V_r^0 \cup R])$. For each $u \in V$, set $x_u = y_u$ if $u \in
V_r^0$; $x_u = y_{v_i}$, if $u \in V_r^i$, $i \in \{1, \ldots, k-1 \}$; and $x_u
= 1 - \sum^{k-1}_{i=1} y_{v_i}$, if $u \in V_r^k$. We prove
that $x \in F_r$. First, to show that $x \in STAB(G)$, take $uw \in E$. If $u,w
\in V_r^0 \cup R$, then $x_u + x_w \leq 1$ trivially holds. If $u,w \in V_r$, then $x_u + x_w \leq 1$ because $\{ u, w \} \setminus V_r^i \ne
\emptyset$, for all $i \in \{1, \ldots, k \}$. Otherwise, assume without loss of
generality that $u \in V_r \setminus R$ and $w \in V_r^0 \setminus R$. It turns out
that $u \not\in V_r^k$ by condition~\eqref{it:VkV0}. Then, use
Lemma~\ref{lem:equalVi} to conclude that $v_iw \in E_r$ and, consequently,
$x_u + x_w \leq 1$. To show that $x_{W_\ell} = 1$, for $\ell \in \{ 1, \ldots, t
\}$, we use Lemma~\ref{lem:interWV} to write $x_{W_\ell} = \sum^{k-1}_{i=1} y_{v_i} + (1 -
\sum^{k-1}_{i=1} y_{v_i}) = 1$.
\item[{$F_r \to STAB(G_r[V_r^0 \cup R])$:}] Take $x \in F_r$. For each $v
\in V_r^0$, set $y_v = x_v$, and for each $i \in \{ 1, \ldots, k -1 \}$, set
$y_{v_i} = x_{v_i}$. This mapping is injective due to Lemma~\ref{lem:xuv}. Take
$vw \in E[V_r^0 \cup R]$.
It is straightforward to check that $y_v + y_w \leq 1$.
\end{description}
\end{proof}

Claim~\eqref{it:Ft} follows directly from Lemma~\ref{lem:cong} combined with the
Lifting Lemma, as follows.

\begin{theorem}
If \eqref{it:Wk}--\eqref{it:Tt} and~\eqref{it:noclique}--\eqref{it:VkV0} hold,
$f_r(x) = x_{W_{r+1}}$, and $W_{r+1}$ is a maximal clique of $G_r$ such that $W_{r+1} \cap V_r^k = \emptyset$,
then~\eqref{it:Ft} holds for all $t \in \{ 1, \ldots, r \}$.
\label{thm:condii}
\end{theorem}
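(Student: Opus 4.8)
The plan is to prove the claim~\eqref{it:Ft} by downward induction on $t$, from $t = r$ down to $t = 1$, establishing at each level that $f_t(x) \leq 1$ is facet defining for $F_t$. The base case $t = r$ will be handled through the affine isomorphism of Lemma~\ref{lem:cong}, while each inductive step will be a direct application of the facetness part of the Lifting Lemma (Lemma~\ref{lem:lifting}), read with the face $F_t$ playing the role that $STAB(G)$ plays in that lemma.

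For the base case, Lemma~\ref{lem:cong} supplies $F_r \cong STAB(G_r[V_r^0 \cup R])$, where $R$ holds one representative $v_i$ of each class $V_r^i$, $i \in \{1,\ldots,k-1\}$, and avoids $V_r^k$. First I would note that $W_{r+1}$, being a clique of $G_r$, meets each stable class $V_r^i$ in at most one vertex; together with $W_{r+1} \cap V_r^k = \emptyset$ this lets me collapse every intersected class to its representative and rewrite $x_{W_{r+1}} = y_{W'_{r+1}}$ under the isomorphism, where $W'_{r+1} \subseteq V_r^0 \cup R$ is the image clique. Since the clique inequality of a maximal clique is facet defining for a stable set polytope, and affine isomorphisms preserve facets, it then suffices to check that $W'_{r+1}$ is a maximal clique of $G_r[V_r^0 \cup R]$. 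This is where Lemma~\ref{lem:equalVi} enters: it guarantees that the vertices of a class share their neighbours in $V_r^0$, so the representative $v_i$ carries exactly the adjacencies of the vertex of $W_{r+1}$ it replaces, while condition~\eqref{it:VkV0} ensures the discarded class $V_r^k$ introduces no spurious non-edge. With these, maximality of $W_{r+1}$ in $G_r$ transfers to maximality of $W'_{r+1}$ in the reduced graph, and hence $f_r(x) = x_{W_{r+1}} \leq 1$ is facet defining for $F_r$.

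For the inductive step with $t < r$, I would assume $f_{t+1}(x) \leq 1$ is facet defining for $F_{t+1} = \{ x \in F_t \mid x_{W_{t+1}} = 1 \}$. By claim~\eqref{it:Ft1} (Theorem~\ref{thm:condi}), $x_{W_{t+1}} \leq 1$ is facet defining for $F_t$, so $F_{t+1}$ is a facet of $F_t$ and $\dim F_{t+1} = \dim F_t - 1$. Moreover, by its definition in Lemma~\ref{lem:strenglifting}, $\lambda^S_{t+1}$ is the maximum of $f_{t+1}(x) - 1$ over the stable sets $x \in \mathcal S_t$ with $x_{W_{t+1}} = 0$, which is exactly the choice making~\eqref{eq:lambda} hold at equality. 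I would then rerun the argument of Lemma~\ref{lem:lifting} with $F_t$ in place of $STAB(G)$: take $\dim F_{t+1}$ affinely independent points on $\{ x \in F_{t+1} \mid f_{t+1}(x) = 1 \}$, which automatically satisfy $f_t(x) = 1$ since $x_{W_{t+1}} = 1$ forces $f_t = f_{t+1}$ there, and adjoin a maximizer $\hat x \in \mathcal S_t$ with $\hat x_{W_{t+1}} = 0$ attaining $\lambda^S_{t+1}$. A short computation gives $f_t(\hat x) = f_{t+1}(\hat x) - \lambda^S_{t+1} = 1$, and $\hat x$ is affinely independent from the previous points because $\hat x_{W_{t+1}} = 0$ whereas they all have $x_{W_{t+1}} = 1$. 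This produces $\dim F_t$ affinely independent points on $\{ x \in F_t \mid f_t(x) = 1 \}$, so $f_t(x) \leq 1$ is facet defining for $F_t$, closing the induction.

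I expect the main obstacle to be the base case, and specifically the maximality of the image clique $W'_{r+1}$ in $G_r[V_r^0 \cup R]$. Reducing $x_{W_{r+1}}$ to a clique inequality on the collapsed coordinates is routine, but ruling out the possibility of appending a vertex of $V_r^0 \cup R$ to $W'_{r+1}$ will require careful use of the uniform-neighbourhood property of Lemma~\ref{lem:equalVi} (itself resting on condition~\eqref{it:noclique}) together with condition~\eqref{it:VkV0}; any gap there would break the transfer of facetness across the affine isomorphism. The inductive step, by contrast, is a mechanical reprise of the Lifting Lemma once $\lambda^S_{t+1}$ is recognised as meeting~\eqref{eq:lambda} at equality.
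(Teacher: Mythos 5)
Your proof follows the paper's argument essentially verbatim: downward induction on $t$, with the base case obtained from Lemma~\ref{lem:cong} together with the facetness of a maximal clique inequality, and the inductive step from claim~\eqref{it:Ft1} combined with the facetness part of the Lifting Lemma applied with $F_t$ playing the role of $STAB(G)$ and $\lambda^S_{t+1}$ recognised as attaining~\eqref{eq:lambda} at equality. The one point where you diverge is the base case, and there the paper sidesteps the maximality-transfer difficulty you flag by simply choosing the representative set $R$ so that $W_{r+1}\cap V_r\subseteq R$ (possible precisely because $W_{r+1}\cap V_r^k=\emptyset$ and a clique meets each stable class at most once), so that $W_{r+1}$ itself is a maximal clique of $G_r[V_r^0\cup R]$ and no collapse to an image clique $W'_{r+1}$, nor the appeal to Lemma~\ref{lem:equalVi}, is needed.
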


\begin{proof}
By induction on $t$. For $t = r$, $x_{W_{r+1}} \leq 1$ is facet defining for
$STAB(G_r[V_r^0 \cup R])$, for every $R \subseteq V$ such that
$|R \cap V_r^i| = 1$, for all $i \in \{1, \ldots, k-1\}$, $R
\cap V_r^k = \emptyset$, and $W_{r+1} \subseteq R$. Such an
$R$ exists since $W_{r+1} \cap V_r^k = \emptyset$. Hence, by
Lemma~\ref{lem:cong}, $f_r(x) = x_{W_{r+1}} \leq 1$ is facet defining for $F_r$. For $t < r$,
$x_{W_{t+1}} \leq 1$ is facet defining for $F_t$ by~\eqref{it:Ft1} (notice
that~\eqref{it:VkV0} implies~\eqref{it:vV0} and therefore we can use
Theorem~\ref{thm:condi}) and $f_{t+1}(x)$ is facet defining for $F_{t+1} = \{ x \in F_t \mid x_{W_{t+1}} = 1 \}$ by induction hypothesis.
Thus, the result follows by Lemma~\ref{lem:lifting}.
\end{proof}

\end{document}